\newtheorem{theorem}{Theorem}
\newtheorem{corollary}[theorem]{Corollary}
\newtheorem{definition}[theorem]{Definition}
\newtheorem{lemma}[theorem]{Lemma}
\newtheorem{proposition}[theorem]{Proposition}
\newtheorem{remark}[theorem]{Remark}
\newenvironment{proof}[1][Proof]{\noindent\textbf{#1.} }{\ \rule{0.5em}{0.5em}}
\begin{document}
\title{Lee-Yang Property and Gaussian Multiplicative chaos}
\author{Charles M. Newman and Wei Wu}
\address[Charles Newman]{Courant Institute of Mathematical Sciences, New
York University, 251 Mercer st, New York, NY 10012, USA \\
\& NYU-ECNU Institute of Mathematical Sciences at NYU Shanghai, 3663
Zhongshan Road North, Shanghai 200062, China}
\address[Wei Wu]{Courant Institute of Mathematical Sciences, New York
University, 251 Mercer st, New York, NY 10012, USA \\
\& Department of Statistics, University of Warwick, Coventry CV4 7AL, UK}
\maketitle

\begin{abstract}
The Lee-Yang property of certain moment generating functions having only
pure imaginary zeros is valid for Ising type models with one-component spins
and XY models with two-component spins. Villain models and complex Gaussian
multiplicative chaos are two-component systems analogous to XY\ models and
related to Gaussian free fields. Although the Lee-Yang property is known to
be valid generally in the first case, we show that is not so in the second.
Our proof is based on two theorems of general interest relating the Lee-Yang
property to distribution tail behavior.
\end{abstract}

\section{\protect\bigskip Overview}

The original theorem of Lee and Yang \cite{LY} on zeros of the partition
function of a spin-$\frac{1}{2}$ ($\left( \pm 1\right) -$valued) Ising model 
$\left\{ \sigma _{v}\right\} $ with pair ferromagnetic interactions implies
that the moment generating function $\mathbb{E}\left[ e^{zX}\right] $ of $%
X=\sum \lambda _{v}\sigma _{v}$ with $\lambda _{v}\geq 0$ has only pure
imaginary zeros (PIZ) in the complex $z-$plane. As we explain in more detail
below, this has been extended \cite{DN, LS} to $X=\sum \lambda _{v}S_{v}^{1}$
where $\left\{ S_{v}=\left( S_{v}^{1},S_{v}^{2}\right) \right\} $ for
certain two-component XY models where $S_{v}$ takes values in the unit
circle of $\mathbb{R}^{2}$. It also extends to $D-$component models with $%
D=3 $ \cite{DN}, known as classical Heisenberg models, but, as far as we
know, has not been extended to $D\geq 4$ --- see Remark 4 in \cite{DN}.

In this paper, we consider two interesting two-component systems, namely the
Villain model \cite{Vi} and complex Gaussian multiplicative chaos, related
to the Gaussian Free Field (GFF) \cite{LRV} and whether they have a Lee-Yang
PIZ property. For Villain models, like for XY models, such a property is
expressed in terms of the cosines of the angular variables; for Gaussian
multiplicative chaos, it is expressed in terms of the real part of the
complex-valued field and hence in terms of a sine-Gordon field --- since the
partition function of subcritical imaginary Gaussian multiplicative chaos in
an external field equals that of a sine-Gordon field. A PIZ property seems
plausible for both systems because of the connection between GFF and the XY
model via the spin-wave conjecture \cite{Dys, MW} on the one hand and the
connection between GFF and the Villain Gaussian-like distribution on the
other. Although Villain models are known to have the PIZ property (see p.
636 of \cite{FS0} and Theorem \ref{V} below) we prove that complex Gaussian
multiplicative chaos in general does not (see Propositions \ref{wpGMC} and %
\ref{noLY} below).

Our analysis relies on two results relating for (symmetric) random variables 
$X_{n}$, the PIZ\ property, tail behavior and convergence in distribution of 
$X_{n}$ to $X$, which may be of independent interest. The first of these
(Theorem \ref{weak}) is a new result which states that the PIZ\ property
plus a sub-Gaussian tail bound for each symmetrically distributed $X_{n}$
(with \emph{no uniformity} in $n$) implies not only that $X$ has the PIZ
property but also, surprisingly, sub-Gaussian tail behavior. The second
(Theorem \ref{slowtail2}) states that any $X$ with tail behavior strictly
between Gaussian and $\exp \left( -c\left\vert x\right\vert ^{1+\varepsilon
}\right) $ cannot have the PIZ\ property; it follows directly from a result
of Goldberg and Ostrovskii \cite{GO} (see Theorem 14.4.2 of \cite{L}).

\section{\protect\bigskip Introduction}

The Lee-Yang theorem, first obtained by Lee and Yang when studying phase
transitions in the classical Ising model, states that \textit{all} the
zeroes of the partition function of the Ising model as a function of the
external magnetic field lie on the imaginary axis \cite{LY}. Lee and Yang
proved this result only for the spin-$\frac{1}{2}$ (i.e., $\left( \pm
1\right) -$valued) Ising model, but later it was extended by Griffiths to
spin-$\frac{n}{2}$ models \cite{Gr}, and then by Newman to ferromagnetic
Ising models with quite general single spin distributions \cite{Ne} (see
also \cite{LS}). Since then, it has been applied to prove the properties of
the infinite volume limit and existence of a mass gap under an external
magnetic field \cite{LP, GRS, FR2}, and to prove correlation inequalities 
\cite{Ne2, Dun2} --- see also \cite{FR} for a general review of the Lee-Yang
type theorem and their applications.

For two-component ferromagnets, analogous Lee-Yang type theorems were first
proved by Dunlop and Newman \cite{DN} for the classical XY\ model (see also 
\cite{SF}) and then by Lieb and Sokal \cite{LS} for generic two-component
ferromagnets with quite general single spin distributions. We now recall the
Lee-Yang theorem for the classical XY model. Given a finite graph $G=\left( 
\mathcal{V}\left( G\right) ,\mathcal{E}\left( G\right) \right) $, we denote
the spin variable at each $v\in \mathcal{V}\left( G\right) $ by $%
S_{v}=\left( S_{v}^{1},S_{v}^{2}\right) $ in the unit circle. Given real
numbers $\left\{ J_{e}\right\} _{e\in \mathcal{E}}$, the classical XY model
on $G$ is defined by a Gibbs measure 
\begin{equation}
Z_{G}^{-1}\exp \left( -\frac{1}{T}H\left( S\right) \right) \prod_{v\in 
\mathcal{V}\left( G\right) }\delta (\left\vert S_{v}\right\vert =1),
\label{cXY}
\end{equation}%
with the Hamiltonian $H\left( S\right) $ given by 
\begin{equation}
H\left( S\right) =-\sum_{e=\left( i,j\right) \in \mathcal{E}\left( G\right)
}J_{e}\text{ }S_{i}\cdot S_{j},  \label{XYham}
\end{equation}%
where $S_{i}\cdot S_{j}=S_{i}^{1}S_{j}^{1}+S_{i}^{2}S_{j}^{2}$, and with $\
Z_{G}=Z_{G}\left( T,\left\{ J_{e}\right\} \right) $ the normalization
constant that makes (\ref{cXY}) a probability measure. This is an $XY$ model
on $G$ with free boundary conditions; for a discussion of other boundary
conditions where the conclusions of the next theorem remain valid, see
Remark \ref{bc} below.

\begin{theorem}[\protect\cite{DN, LS}]
\label{LYXY}Suppose that $J_{e}\geq 0$ for all $e\in \mathcal{E}$ and $%
\lambda _{v}\geq 0$ for all $v\in \mathcal{V}(G)$, then 
\begin{equation*}
f_{XY}\left( z\right) :=\mathbb{E}\left[ \exp \left( z\sum_{v\in \mathcal{V}(G)%
}\lambda _{v}S_{v}^{1}\right) \right]
\end{equation*}%
has only pure imaginary zeros (namely, it is not zero when $\Re{z}>0$ or $\Re%
{z}<0$). Here $\mathbb{E}$ dentoes the expectation with respect to the
probability measure (\ref{cXY}).
\end{theorem}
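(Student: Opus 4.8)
The plan is to reduce the statement to a multivariate non-vanishing assertion and to prove the latter by switching on the ferromagnetic couplings one edge at a time, in the spirit of \cite{DN, LS}. Since the measure \eqref{cXY} is invariant under $S^1_v\mapsto -S^1_v$ applied at every vertex simultaneously, $f_{XY}(-z)=f_{XY}(z)$, and it suffices to rule out zeros with $\Re z>0$. Identifying $\mathbb{R}^2\cong\mathbb{C}$ and writing $\zeta_v=S^1_v+iS^2_v=e^{i\theta_v}$, one has $S^1_v=\tfrac12(\zeta_v+\bar\zeta_v)$ and $S_i\cdot S_j=\tfrac12(\bar\zeta_i\zeta_j+\zeta_i\bar\zeta_j)$, so, introducing an independent pair of complex sources $(a_v,b_v)$ at each vertex,
\[
G\bigl((a_v,b_v)_v\bigr):=\int \prod_{e=(i,j)}\exp\!\Bigl(\tfrac{J_e}{2T}\bigl(\bar\zeta_i\zeta_j+\zeta_i\bar\zeta_j\bigr)\Bigr)\ \prod_v\exp\!\bigl(a_v\zeta_v+b_v\bar\zeta_v\bigr)\ \prod_v\frac{d\theta_v}{2\pi}
\]
is entire of exponential type $\le1$ in each variable (because $|\zeta_v|=1$), is real on the relevant real slices, satisfies $G(0,\dots,0)=Z_G>0$, and the numerator of $f_{XY}(z)$ equals $G$ evaluated at $a_v=b_v=z\lambda_v/2$. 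It therefore suffices to prove that $G\neq 0$ as soon as $\Re a_v>0$ and $\Re b_v>0$ for all $v$: specializing to $a_v=b_v=z\lambda_v/2$ with $\lambda_v>0$ then gives the theorem, while the case in which some $\lambda_v$ vanish follows by replacing each $\lambda_v$ by $\lambda_v+\varepsilon$, letting $\varepsilon\downarrow0$, and applying Hurwitz's theorem (using $f_{XY}(0)=1$).

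To prove the displayed non-vanishing, I would turn on the edges one at a time. When all $J_e=0$ the integral factorizes and
\[
G^{(0)}\bigl((a_v,b_v)_v\bigr)=\prod_v\int_0^{2\pi}e^{a_ve^{i\theta}+b_ve^{-i\theta}}\,\frac{d\theta}{2\pi}=\prod_v I_0\bigl(2\sqrt{a_vb_v}\bigr).
\]
Since $I_0(w)=J_0(iw)$ and all zeros of the Bessel function $J_0$ are real, the zeros of $I_0$ are purely imaginary, so $I_0(2\sqrt{a_vb_v})=0$ would force $a_vb_v$ to be a negative real number; but $\Re a_v>0$ and $\Re b_v>0$ give $\arg a_v,\arg b_v\in(-\tfrac\pi2,\tfrac\pi2)$, hence $\arg(a_vb_v)\in(-\pi,\pi)$ and $a_vb_v\notin(-\infty,0]$. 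Thus $G^{(0)}$ does not vanish on $\{\Re a_v>0,\ \Re b_v>0\ \forall v\}$. Expanding $\exp(\tfrac{J_e}{2T}(\bar\zeta_i\zeta_j+\zeta_i\bar\zeta_j))$ in its Taylor series shows that inserting the $e$-th factor acts on the transform built so far by the commuting pair of differential operators $\exp\!\bigl(\tfrac{J_e}{2T}\partial_{b_i}\partial_{a_j}\bigr)$ and $\exp\!\bigl(\tfrac{J_e}{2T}\partial_{a_i}\partial_{b_j}\bigr)$, each with non-negative coefficient.

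The crux is then the lemma --- the engine behind all Lee--Yang theorems of this kind --- that for $c\ge0$ and two distinct variables $u,v$ the operator $\exp(c\,\partial_u\partial_v)$ maps an entire function of the relevant class (bounded exponential type in each variable, real on real slices) that is non-vanishing whenever all its variables lie in the open right half-plane to another such function. Applying it once per edge, starting from $G^{(0)}$, yields $G\neq0$ on $\{\Re a_v>0,\ \Re b_v>0\ \forall v\}$, which finishes the proof. Establishing this lemma is the main obstacle: one reduces, by truncating the Taylor expansions of $I_0$ and of the exponentials to polynomials and passing to the limit with Hurwitz's theorem and the uniform bound $|G|\le e^{\sum_e J_e/T}\prod_v e^{|a_v|+|b_v|}$, to the corresponding statement for polynomials in several variables that do not vanish when all variables lie in a half-plane --- a multivariate Hermite--Biehler/Obreschkoff-type theorem, provable also via an Asano-contraction argument --- and one must check that the relevant class is genuinely closed under these limits and operators. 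The two-component structure is precisely what makes this delicate: because $S_i\cdot S_j=S^1_iS^1_j+S^2_iS^2_j$ couples the two components, one is forced to control non-vanishing jointly in both source families $(a_v)_v$ and $(b_v)_v$ rather than in a single family as in the Ising case, so a naive product-of-half-planes region in the $S^1$-sources alone does not suffice; the passage through the complex variable $\zeta_v$ is what organizes this. Equivalently, one may follow Dunlop--Newman \cite{DN}, who treat $D=2,3$ directly, or Lieb--Sokal \cite{LS}, who allow general single-spin distributions; both routes rest on the same differential-operator lemma.
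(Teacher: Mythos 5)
The paper offers no proof of Theorem \ref{LYXY}: it is imported verbatim from \cite{DN, LS} and used only as an input to the proof of Theorem \ref{V}, so there is no in-paper argument to measure yours against. That said, your outline is a faithful reconstruction of the Lieb--Sokal route (with the Dunlop--Newman single-spin computation as the base case), and the steps you actually carry out are correct: the reduction by the global symmetry $S^1_v\mapsto -S^1_v$ to the half-plane $\Re z>0$; the passage to the doubled source families $(a_v,b_v)$, which is indeed forced by the fact that $S_i\cdot S_j$ couples both components; the identification of the decoupled transform with $\prod_v I_0(2\sqrt{a_vb_v})$ and the verification, via the reality of the zeros of $J_0$, that this is zero-free when $\Re a_v>0$ and $\Re b_v>0$ (note $G$ is genuinely real at real source values because $\theta_v\mapsto -\theta_v$ swaps $a_v\leftrightarrow b_v$ while fixing the edge weights); the representation of each edge factor as $\exp(\tfrac{J_e}{2T}\partial_{b_i}\partial_{a_j})\exp(\tfrac{J_e}{2T}\partial_{a_i}\partial_{b_j})$; and the $\varepsilon$-perturbation plus Hurwitz step for vanishing $\lambda_v$. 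The one caveat is that the entire weight of the argument rests on the closure lemma for $\exp(c\,\partial_u\partial_v)$, $c\geq 0$, acting on the relevant class of entire functions of exponential type --- together with the justification for pulling these infinite-order operators through the integral --- and that lemma, which you state but do not prove, is essentially the whole content of \cite{LS}. So what you have is a correct reduction and road map rather than a self-contained proof; given that the paper itself cites this theorem rather than proving it, that is the appropriate level of detail, but the lemma should be cited explicitly rather than presented as something to be established en route.
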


\begin{remark}
Theorem \ref{LYXY} is a corollary of a more general result that $\mathbb{E}%
\left[ \exp \left( \sum_{v\in \mathcal{V}(G)}z_{v}S_{v}^{1}\right) \right] $
does not vanish when $\Re{z_{v}}>0$ $\forall v\in \mathcal{V}(G).$
\end{remark}

The classical Villain model is another two-component spin model which is
closely related to the XY\ model \cite{Vi}. Given any finite graph $G=\left( 
\mathcal{V}\left( G\right) ,\mathcal{E}\left( G\right) \right) $, and
positive numbers $\left\{ J_{e}\right\} _{e\in \mathcal{E}\left( G\right) }$%
, a Villain model on $G$ is defined by the Gibbs measure 
\begin{equation}
Z_{G}^{-1}\prod_{e=\left( u,v\right) \in \mathcal{E}\left( G\right)
}V_{e}\left( \theta _{u}-\theta _{v}\right) \prod_{v\in \mathcal{V}\left(
G\right) }d\theta _{v},  \label{Vil}
\end{equation}%
where $\theta _{v}\in (-\pi ,\pi ]$ for $v\in \mathcal{V}\left( G\right) $
and $d\theta _{v}$ is Lebesgue measure on $(-\pi ,\pi ]$, 
\begin{equation}
V_{e}\left( \theta \right) =\sum_{m\in \mathbb{Z}}\exp \left( -\frac{J_{e}}{2%
}\left( \theta +2\pi m\right) ^{2}\right)   \label{Ve}
\end{equation}%
is a periodized Gaussian, and $Z_{G}$ is the normalizing constant. Let $%
\left( \Theta _{v}:v\in \mathcal{V}\left( G\right) \right) $ be jointly
distributed by the Gibbs measure (\ref{Vil}). Given non-negative real
numbers $\left\{ \lambda _{v}\right\} _{v\in \mathcal{V}\left( G\right) }$,
let $\mu _{\lambda }$ denote the distribution of 
\begin{equation}
\sum_{v\in \mathcal{V}\left( G\right) }\lambda _{v}\cos \Theta _{v}.
\label{lam}
\end{equation}

The validity of the Lee-Yang type property as well as correlation
inequalities for Villain models, due to Fr{\"o}hlich and Spencer and to
Bellissard (\cite{FS0} and ref. 34 there), is discussed and a proof is
sketched on pp. 635-636 of \cite{FS0}. Since that Lee-Yang property does not
seem to be widely known, we present it as the next theorem and provide a
detailed proof in Sections \ref{Sec:V} and \ref{Sec:pf}. The Lee-Yang
property does not seem to follow from the Lieb-Sokal approach since unlike (%
\ref{XYham}), the Villain model (\ref{Vil}) is not written as a Gibbs
measure with ferromagnetic pair interactions.

\begin{theorem}[\protect\cite{FS0}]
\label{V}Suppose that $J_{e}\geq 0$ for all $e\in \mathcal{E}\left( G\right) 
$ and $\lambda _{v}\geq 0$ for all $v\in \mathcal{V}\left( G\right) $, then 
\begin{equation*}
f_{V}\left( z\right) :=\mathbb{E}\left[ \exp \left( z\sum_{v\in \mathcal{V}%
\left( G\right) }\lambda _{v}\cos \Theta _{v}\right) \right] =\int_{-\infty
}^{\infty }e^{zx}d\mu _{\lambda }\left( x\right)
\end{equation*}%
has only pure imaginary zeros as a function of complex $z$. Here $\mathbb{E}$
dentoes the expectation with respect to the probability measure (\ref{Vil}).
\end{theorem}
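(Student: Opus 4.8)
The plan is to reduce the Villain Lee-Yang statement to the XY Lee-Yang theorem (Theorem~\ref{LYXY}) by approximating the periodized Gaussian weight $V_e$ by a suitable limit of XY-type Gibbs measures. The key observation is that the Villain weight $V_e(\theta)=\sum_{m\in\mathbb{Z}}\exp(-\tfrac{J_e}{2}(\theta+2\pi m)^2)$ is, by Poisson summation, proportional to $\sum_{k\in\mathbb{Z}}\exp(-k^2/(2J_e))\,e^{ik\theta}$, i.e.\ it has a positive-definite Fourier series with non-negative coefficients. The first step is therefore to establish this non-negativity and, more importantly, to show that $V_e$ can be written as a limit (or an integral) of products/compositions of functions of the form $\theta\mapsto \exp(K\cos\theta)$ with $K\ge 0$ — equivalently, that the single-edge measure $V_e(\theta)\,d\theta$ on the circle is a weak limit of finite positive mixtures of XY-type two-body weights. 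Concretely, one can realize a single ``Villain spin'' on an edge as the large-$N$ limit of a chain of $N$ XY spins with nearest-neighbour ferromagnetic couplings (the random-walk/heat-kernel representation), since iterating $\exp(K\cos(\theta_i-\theta_{i+1}))$ and integrating out the intermediate angles produces, as $N\to\infty$ with $K$ scaled appropriately, exactly the heat kernel on the circle, whose periodization is $V_e$ up to normalization.

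Carrying this out, the second step is to build, for the whole graph $G$, an auxiliary XY model $G_N$ obtained by replacing each edge $e$ of $G$ by a path of $N$ edges carrying ferromagnetic couplings $J_e^{(N)}\ge 0$ chosen so that integrating out the interior vertices of each path recovers (up to a constant) the Villain weight $V_e$ in the $N\to\infty$ limit. The external-field term $z\sum_v \lambda_v\cos\Theta_v$ only involves the original vertices $v\in\mathcal V(G)$, which survive as genuine vertices of $G_N$; we put zero external field on all the newly introduced interior vertices. Then
\begin{equation*}
f_V(z)=\lim_{N\to\infty} c_N\, f_{XY}^{(N)}(z),
\end{equation*}
where $f_{XY}^{(N)}$ is the moment generating function of $\sum_v\lambda_v S_v^1$ in the model $G_N$ and $c_N>0$ is a normalization. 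Here one must check that the convergence is locally uniform on $\mathbb{C}$ (it suffices to have convergence of all moments together with a uniform Gaussian-type bound on $|f_{XY}^{(N)}(z)|$, which holds since $|\sum_v\lambda_v\cos\Theta_v|\le\sum_v\lambda_v$ is bounded — indeed every $f_{XY}^{(N)}$ and $f_V$ is entire of order $1$ with uniformly controlled type). By Theorem~\ref{LYXY} each $f_{XY}^{(N)}(z)$ is non-vanishing for $\Re z\ne 0$, so by Hurwitz's theorem the locally uniform limit $f_V$ is either identically zero or also non-vanishing off the imaginary axis; since $f_V(0)=1$ it is not identically zero, giving the PIZ property.

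Two technical points will need care. First, one must confirm that integrating out a ferromagnetic XY path really does yield the periodized heat kernel: this is the statement that the $n$-fold convolution on the circle of the single-step kernel $\propto e^{K\cos\theta}$ converges, under the scaling $K\sim n/J_e$, to the periodized Gaussian $V_e$; this is a standard local-limit/heat-kernel computation using the Fourier coefficients $I_k(K)$ (modified Bessel functions) and the asymptotics $I_k(K)/I_0(K)\to e^{-k^2/(2K)}$-type expansions, but it has to be done with enough uniformity in $k$ to control the convergence of the measures. Second, one must verify that introducing extra vertices with $\lambda=0$ is legitimate within the hypotheses of Theorem~\ref{LYXY} — it is, since $\lambda_v\ge 0$ allows $\lambda_v=0$ — and that the path construction keeps all couplings non-negative, which it does by choosing $J_e^{(N)}\ge 0$. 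The main obstacle is the first point: making the XY-path-to-Villain-edge limit precise with sufficient uniformity so that the Hurwitz argument applies. An alternative, possibly cleaner route that avoids the limit altogether is to use the non-negative Fourier expansion of $V_e$ directly: write $V_e(\theta_u-\theta_v)=\sum_{k\ge 0} a_k^{(e)}\cos(k(\theta_u-\theta_v))$ with $a_k^{(e)}\ge 0$, expand the product over edges, and on each resulting term use the substitution $\phi_v=$ (angle) together with the Lieb--Sokal/Dunlop--Newman machinery for trigonometric moments — but since $\cos(k\theta)$ is not itself of the Lee-Yang-admissible XY form, this still ultimately routes back through an approximation, so I expect the path-limit approach to be the one to push through; this is presumably the content of Sections~\ref{Sec:V} and~\ref{Sec:pf} referenced above.
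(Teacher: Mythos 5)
Your proposal is correct and follows essentially the same route as the paper: each Villain edge is realized as the $n\to\infty$ limit of a long ferromagnetic one-dimensional XY chain (so that the $n$-fold convolution of $e^{K\cos\theta}$ on the circle converges to the periodized heat kernel $V_e$), Theorem \ref{LYXY} is applied to the approximating XY models on the subdivided graph with $\lambda=0$ on the new vertices, and the pure-imaginary-zeros property is transferred to the limit. The only differences are minor: the paper passes to the limit via its general weak-convergence result (Theorem \ref{weak}, itself proved by Hurwitz) and inserts an extra star-vertex/$J\to\infty$ step to re-identify the chain endpoints with the original vertices, whereas you subdivide the edges directly and invoke Hurwitz with the uniform bound $\vert f^{(N)}(z)\vert\le e^{\vert z\vert\sum_v\lambda_v}$ that is available because the observable is bounded.
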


\begin{remark}
By extending Theorem \ref{weak} below to a multivariable version, the same
proof leads to a more general result, that is, for any $\left\{
z_{v}\right\} _{v\in \mathcal{V}\left( G\right) }$ such that $\Re{z_{v}}>0$ $%
\forall v$, \newline
$\mathbb{E}\left[ \exp \sum_{v\in \mathcal{V}\left( G\right) }z_{v}\cos
\Theta _{v}\right] \neq 0$.
\end{remark}

\begin{remark}
\label{bc}Theorem \ref{V} is stated for the Villain model (\ref{Vil}) with
free boundary condition. The same result holds for some other boundary
conditions, including periodic (i.e., the Villain model on a torus), or for
Dirichlet-type boundary condition where $\Theta _{v}$ for all $v$ in the
boundary are equal with a value that is uniformly distributed on the unit
circle. Periodic boundary condition on a torus simply correspond to a
different choice of $G$ than do free boundary conditions, while Dirichlet
type boundary conditions can be handled as a corollary of Theorem \ref{V} by
adding couplings between boundary vertices and letting the coupling
magnitudes tend to infinity. We note that the Lee-Yang property of Theorem %
\ref{LYXY} for XY models will also be valid for such boundary conditions.
\end{remark}

Obtaining new Lee-Yang type theorems for lattice models can be useful to
derive new results for continuum field theories. For example, Simon and
Griffiths \cite{SG} proved a Lee-Yang result for the continuum $\left( \phi
^{4}\right) _{2}$ Euclidean field theory by first obtaining a new Lee-Yang
result for $\phi ^{4}$ lattice models. Indeed, part of our motivation comes
from the so-called spin-wave conjecture \cite{Dys, MW}, which states that
for both the XY and the Villain model at temperature $T$ less than some
critical value, on large scales the angular variables $\Theta _{v}$ behave
like a Gaussian Free Field (GFF) modulo $2\pi $. This suggests that the spin
field ($S_{\cdot }$ in the XY\ model and $\exp \left( i\Theta _{\cdot
}\right) $ in the Villain model) may behave like a version of complex
Gaussian multiplicative chaos (see e.g., \cite{LRV}); or, after a duality
transformation, a version of the Sine-Gordon field \cite{FS}. We have not
obtained a Lee-Yang property for complex Gaussian multiplicative chaos and
even ruled it out (as we discuss below) in a certain parameter range.
However that parameter range does not correspond to very low temperature $T$%
, so it may still be that there is a low $T$ Lee-Yang property.

Our approach is based on showing that a certain set of properties of moment
generating functions (see Definition \ref{LYp}) is preserved under
convergence in distribution (Theorem \ref{weak}). This is more than the
requirement that all zeros are pure imaginary --- it further requires a
sub-Gaussian tail for the distribution. In Sections \ref{Sec:V} and \ref%
{Sec:pf} we prove Theorem \ref{V} by approximating the Villain model\ on any
finite graph by one dimensional XY spin chains. Since the XY models do
satisfy the Lee-Yang property, we obtain the result by applying the
convergence Theorem \ref{weak}.

Based on the spin-wave conjecture, it is natural to ask whether there is a
Lee-Yang property for complex Gaussian multiplicative chaos, namely $\exp
\left( i\beta h\right) $ where $h$ is a two dimensional GFF. In Section \ref%
{Sec:GMC} we give a negative answer to this question when $\beta \in \left(
1,\sqrt{2}\right) $ --- the Lee-Yang property does not hold for complex
Gaussian multiplicative chaos with such values of $\beta $. The proof is
based on a general theorem that shows that the pure imaginary zeros property
does not hold for random variables with tail slower than Gaussian (Theorem %
\ref{slowtail2}), and an explicit computation of the tail probability for
the\ integral of complex Gaussian multiplicative chaos. We also study the
so-called discrete complex Gaussian multiplicative chaos on finite graphs,
which roughly speaking, is $\exp \left( i\beta h\right) $ where $h$ is a
discrete Gaussian free field (DGFF) with certain boundary condition where
the Lee-Yang property might be expected to hold. We show (see Proposition %
\ref{noLY}) that for discrete complex Gaussian multiplicative chaos with $%
\beta \in \left( 1,\sqrt{2}\right) $, the Lee-Yang property cannot hold on
all finite graphs. Interestingly, we use there a corollary of the weak
convergence result of Theorem \ref{weak} to rule out the Lee-Yang property.

Complex Gaussian multiplicative chaos corresponds to a type of a Sine-Gordon
field (see, e.g., \cite{FS} and \cite{DH}). It has been pointed out to us by
T. Spencer [personal communication] that because of the relation of the
Lee-Yang property to exponential decay or existence of a mass gap for
Sine-Gordon fields, how the validity of the Lee-Yang property depends on the
parameter $\beta $ is of some interest. The Sine-Gordon beta-parameter used
in \cite{FS}, which we denote here by $\tilde{\beta}$, is related to the $%
\beta $ we use for complex Gaussian multiplicative chaos by $\tilde{\beta}%
=2\pi \beta ^{2}$. Thus our region $\beta \in \left( 1,\sqrt{2}\right) $ of
non-validity of the Lee-Yang property corresponds to $\tilde{\beta}\in
\left( 2\pi ,4\pi \right) $.

\section{Lee-Yang property and weak convergence\label{sec2}}

Let $\mu $ be a probability measure on $\mathbb{R}$ and $X$ be a random
variable on some probability space $\left( \Omega ,\mathcal{F},\mathbb{P}%
\right) $ with distribution $\mu $.

\begin{definition}
\label{LYp}We say $\mu $ (or $X$) is of Lee-Yang type (and write $\mu \in 
\mathcal{L}$) if
\end{definition}

\begin{enumerate}
\item $X$ has the same distribution as $-X$

\item $\mathbb{E}\left[ \exp \left( bX^{2}\right) \right] <\infty $ for some 
$b>0$

\item For $z\in \mathbb{C}$, $\mathbb{E}\left[ \exp \left( zX\right) \right] 
$ only has zeros on the pure imaginary axis.
\end{enumerate}

The next theorem states that the Lee-Yang type property is preserved under
weak convergence and helps explain why the sub-Gaussian property (2) is
built into Definition ~\ref{LYp}.

\begin{theorem}
\label{weak}Suppose for each $n\in \mathbb{N}$, $\mu _{n}\in \mathcal{L}$,
and $\mu _{n}$ converges weakly to the probability measure $\mu $. Then $\mu
\in \mathcal{L}$.
\end{theorem}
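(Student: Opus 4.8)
Property~(1) of Definition~\ref{LYp} passes to the limit for free: since $x\mapsto -x$ is continuous, $-X_n\Rightarrow-X$, while $-X_n\overset{d}{=}X_n\Rightarrow X$, so by uniqueness of weak limits $X\overset{d}{=}-X$. The real content is~(2) and~(3), and I would argue on the characteristic-function side. Write $f_n(z)=\mathbb{E}[e^{zX_n}]$ and $\phi_n(w)=f_n(iw)=\mathbb{E}[e^{iwX_n}]$; by property~(2) for $\mu_n$, $\phi_n$ is entire of finite order $\le 2$, it is even and real on $\mathbb{R}$ with $\phi_n(0)=1$ and $|\phi_n|\le 1$ on $\mathbb{R}$, and by~(3) for $\mu_n$ all its zeros are real. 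The Hadamard factorization theorem gives $f_n(z)=e^{\gamma_n z^2}P_n(z)$, $\gamma_n\in\mathbb{R}$, with $P_n$ the canonical product over the pure-imaginary zeros $\{\pm i\alpha_{n,k}\}$ (the linear term in the exponential vanishes by symmetry, the constant term by $f_n(0)=1$, and $\gamma_n$ is real since both sides are real on $\mathbb{R}$). Using Jensen's inequality $f_n(x)=\mathbb{E}[e^{xX_n}]\ge1$ for real $x$ one checks two things: (i)~no convergence factors are needed, i.e. $\sum_k\alpha_{n,k}^{-2}<\infty$ and $P_n(z)=\prod_k(1+z^2/\alpha_{n,k}^2)$ --- otherwise $\log P_n(x)\le-\tfrac12 x^2 g_n(x)$ with $g_n(x)\uparrow\sum_k\alpha_{n,k}^{-2}=\infty$, which would force $f_n(x)\to0$; and (ii)~$\gamma_n\ge0$ --- since then $P_n$ grows only like $e^{o(x^2)}$ on $\mathbb{R}$, so $0\le\log f_n(x)=\gamma_n x^2+o(x^2)$. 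Thus $\phi_n(w)=e^{-\gamma_n w^2}\prod_k(1-w^2/\alpha_{n,k}^2)$ lies in the Laguerre--P\'olya class.

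The crux is a uniform bound $\sup_n\mathrm{Var}(X_n)<\infty$. By L\'evy's continuity theorem $\phi_n\to\phi$ uniformly on compacts of $\mathbb{R}$, where $\phi$ is the characteristic function of $\mu$, hence continuous with $\phi(0)=1$. Let $r_n\in(0,\infty]$ be the smallest positive zero of $\phi_n$ (with $r_n=\infty$ if there is none); if $r_{n_j}\to 0$ along a subsequence then $\sup_{|t|\le\delta}|\phi_{n_j}(t)-\phi(t)|\ge|\phi_{n_j}(r_{n_j})-\phi(r_{n_j})|=|\phi(r_{n_j})|\to 1$ for every $\delta>0$, contradicting uniform convergence near $0$; so $r_n\ge\varepsilon_0>0$ for all $n$. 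For $0<t<\varepsilon_0$ the product formula gives $0<\phi_n(t)\le\exp(-\tfrac12 t^2\,\mathrm{Var}(X_n))$, where $\mathrm{Var}(X_n)=2\gamma_n+2\sum_k\alpha_{n,k}^{-2}$ is read off the Taylor expansion of $\log\phi_n$ at $0$; hence if $\mathrm{Var}(X_{n_j})\to\infty$ then $\phi\equiv0$ on $(0,\varepsilon_0)$, again impossible. So $\mathrm{Var}(X_n)\le V<\infty$, whence $\gamma_n$ and $\sum_k\alpha_{n,k}^{-2}$ are bounded and $|f_n(w)|\le e^{\gamma_n|w|^2}\prod_k(1+|w|^2/\alpha_{n,k}^2)\le e^{V|w|^2/2}$, so also $|\phi_n(w)|\le e^{V|w|^2/2}$, uniformly in $n$ and $w\in\mathbb{C}$.

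Now Montel's theorem shows every subsequence of $(\phi_n)$ has a further subsequence converging locally uniformly on $\mathbb{C}$; any such limit is entire, bounded by $e^{V|w|^2/2}$, and agrees with $\phi$ on $\mathbb{R}$, hence is the unique entire extension $\widetilde\phi$ of $\phi|_{\mathbb{R}}$. So $\phi_n\to\widetilde\phi$ locally uniformly on $\mathbb{C}$, with $\widetilde\phi$ entire of order $\le 2$, even, real and $\le 1$ in modulus on $\mathbb{R}$, and --- by Hurwitz's theorem, since $\widetilde\phi(0)=1$ --- with only real zeros. By Fatou's lemma for weak convergence (applied to the nonnegative continuous map $x\mapsto e^{zx}$, $z\in\mathbb{R}$), $\mathbb{E}[e^{zX}]\le\liminf_n\mathbb{E}[e^{zX_n}]=\lim_n f_n(z)=\widetilde\phi(-iz)\le e^{Vz^2/2}$, so a Chernoff estimate gives $\mathbb{P}(|X|>x)\le 2e^{-x^2/(2V)}$: this is property~(2). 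Finally, given~(2), $f(z):=\mathbb{E}[e^{zX}]$ is entire and $f(iw)=\phi(w)=\widetilde\phi(w)$ on $\mathbb{R}$, so $f(z)=\widetilde\phi(-iz)$ on $\mathbb{C}$; the zeros of $\widetilde\phi$ being real, those of $f$ are pure imaginary, which is property~(3). Hence $\mu\in\mathcal{L}$.

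I expect the main obstacle to be precisely the bound $\sup_n\mathrm{Var}(X_n)<\infty$ and the ensuing uniform growth estimate --- the point at which the non-uniform sub-Gaussianity of the $\mu_n$ gets upgraded to a uniform one. The only available leverage is that each $\mu_n$ is a probability measure, so $|\phi_n|\le 1$ on $\mathbb{R}$, together with continuity of the weak limit $\phi$ at $0$; that these two facts pin down the variances is exactly the mechanism producing the ``surprising'' sub-Gaussian tail of $\mu$ promised in the overview. Everything else --- the Hadamard factorization, excluding convergence factors, Montel, Hurwitz, and Fatou for weak convergence --- is routine once this is in hand.
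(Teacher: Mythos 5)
Your proof is correct and follows essentially the same route as the paper's: a Hadamard/Laguerre--P\'olya factorization $f_n(z)=e^{\gamma_n z^2}\prod_k(1+z^2/\alpha_{n,k}^2)$ (the paper's Proposition \ref{entire}), the identification of the crux as $\sup_n\mathrm{Var}(X_n)<\infty$, proved by exactly the same contradiction (the first imaginary zero stays away from the origin by convergence of $\phi_n$ to the continuous $\phi$ with $\phi(0)=1$, while unbounded variance would force $\phi_n(t)\le e^{-t^2\mathrm{Var}(X_n)/2}\to0$ below that zero), and then the uniform growth bound, normal-family convergence, Hurwitz, and a Fatou-type argument for the sub-Gaussian tail of the limit. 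The only differences are cosmetic (working with $\phi_n(w)=f_n(iw)$, and deriving property (2) of Definition \ref{LYp} for $\mu$ directly from the limiting bound rather than from uniform sub-Gaussianity of the $\mu_n$).
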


\begin{remark}
We note in particular that the limiting measure must satisfy $\mu \left(
r,\infty \right) \leq \exp \left( -cr^{2}\right) $ for some $c>0$. We will
use this fact later. This property of $\mu $ seems a-priori surprising,
because if one does not assume $\mu _{n}$ has the pure imaginary zero
property (Definition \ref{LYp},(3)), the conclusion is not true since the
constant $b$ in Definition \ref{LYp},(2) may depend on\ $n$.
\end{remark}

\begin{remark}
Approximation schemes play a major role in classical work (e.g., \cite{SG,
Ne}) on Lee-Yang type theorems for Ising-like systems. It seems that Theorem %
\ref{weak} fits together with that work primarily as a helpful tool ---
i.e., it suffices to show weak convergence, without any extra moment or
moment generating function estimates, to guarantee that the limit system
will have the desired Lee-Yang property.
\end{remark}

\begin{corollary}
\label{seq} Suppose that for $n=1,2,...,$

\begin{enumerate}
\item $X_{n}$ has the same distribution as $-X_{n}$, and

\item $\mathbb{E}\left[ \exp \left( b_{n}X_{n}^{2}\right) \right] <\infty $
for some $b_{n}>0$.
\end{enumerate}

\noindent If $X_{n}$ converges in distribution to $X$ and $\mathbb{E}\left[
\exp \left( bX^{2}\right) \right] =\infty $ for all $b>0$, it follows that
for all but finitely many $n$, $\mathbb{E}\left[ \exp \left( zX_{n}\right) %
\right] $ has some zeros that are not purely imaginary.
\end{corollary}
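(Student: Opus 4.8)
The plan is to obtain this directly from Theorem~\ref{weak} by a contradiction argument; I do not expect a serious obstacle, since the corollary is essentially the contrapositive of the theorem, repackaged for a sequence. First I would fix notation: let $\mu_n$ be the law of $X_n$ and $\mu$ the law of $X$, and set
\[
A := \{\, n \in \mathbb{N} : \mathbb{E}[e^{zX_n}] \text{ has only pure imaginary zeros} \,\}.
\]
The goal is then to show that $A$ is finite, because for $n \notin A$ the entire function $\mathbb{E}[e^{zX_n}]$ has, by the definition of $A$, at least one zero off the imaginary axis, which is precisely the asserted conclusion. The key observation is that for every $n \in A$ the hypotheses (1) and (2) of the corollary coincide with conditions (1) and (2) of Definition~\ref{LYp}, while membership in $A$ supplies condition (3); hence $\mu_n \in \mathcal{L}$ for every $n \in A$.

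Next, suppose for contradiction that $A$ is infinite. Enumerate $A = \{n_1 < n_2 < \cdots\}$ and consider the subsequence of measures $(\mu_{n_k})_{k \geq 1}$. Since $X_n \to X$ in distribution, every subsequence also converges in distribution to $X$, so $\mu_{n_k} \to \mu$ weakly; and each $\mu_{n_k} \in \mathcal{L}$ by the previous paragraph. Applying Theorem~\ref{weak} to the sequence $(\mu_{n_k})_{k}$ (relabelled by $\mathbb{N}$), we conclude $\mu \in \mathcal{L}$. In particular, condition (2) of Definition~\ref{LYp} holds for $X$, i.e.\ $\mathbb{E}[e^{bX^2}] < \infty$ for some $b > 0$. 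This contradicts the standing hypothesis that $\mathbb{E}[e^{bX^2}] = \infty$ for all $b > 0$. Hence $A$ is finite, which completes the proof.

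The only points demanding a moment's care are bookkeeping ones: Theorem~\ref{weak} is phrased for a full sequence indexed by $\mathbb{N}$, so one must relabel the subsequence $(\mu_{n_k})_k$ before invoking it, and one uses that convergence in distribution passes to subsequences. Neither is a genuine difficulty. It may also be worth remarking explicitly, when writing this up, that the ``only pure imaginary zeros'' clause is understood to include the degenerate case of no zeros at all (as for a Gaussian), so that $A$ is exactly the complement of the set of $n$ for which the conclusion holds.
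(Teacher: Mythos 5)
Your proof is correct and is essentially identical to the paper's: both argue by contradiction, extract the subsequence of indices where the pure-imaginary-zeros property holds (so that $\mu_n\in\mathcal{L}$), apply Theorem~\ref{weak} to conclude $\mu\in\mathcal{L}$, and derive a contradiction with the hypothesis that $\mathbb{E}[e^{bX^2}]=\infty$ for all $b>0$. Your write-up just spells out the bookkeeping (the set $A$, relabelling the subsequence) that the paper leaves implicit.
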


\begin{proof}
The proof is by contradiction. If the conclusion were not valid, then there
would be a subsequence $X_{n_{k}}$ with distributions $\mu _{k}\in \mathcal{L%
}$ which would converge weakly to the distribution $\mu $ of $X$ with $\mu
\notin \mathcal{L}$. That would contradict Theorem \ref{weak}, which
completes the proof.
\end{proof}

The next theorem relates the Lee-Yang property to the distribution tail
behavior and explains further why the sub-Gaussian property (2) is natural
in Definition \ref{LYp}. As we explain below, it follows directly from a
theorem of Goldberg and Ostrovskii \cite{GO} (see Theorem 14.4.2 of \cite{L}%
).

\begin{theorem}
\label{slowtail2} Suppose the random variable $X$ satistifes the following
two properties:

\begin{enumerate}
\item $\mathbb{E}e^{b\left\vert X\right\vert ^{a}}<\infty $ for some $b>0$
and $a>1,$

\item $\mathbb{E}e^{b^{\prime }X^{2}}=\infty $ for all $b^{\prime }>0$.
\end{enumerate}

\noindent Then $\mathbb{E}e^{zX}$ has some zeroes that are not purely
imaginary.
\end{theorem}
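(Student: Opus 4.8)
The plan is to deduce Theorem~\ref{slowtail2} from the Goldberg--Ostrovskii theorem on entire characteristic functions having only real zeros, quoted as Theorem 14.4.2 of \cite{L}, after a short reduction that identifies the ``pure imaginary zeros'' hypothesis for $\mathbb{E}e^{zX}$ with the ``only real zeros'' hypothesis for the characteristic function of $X$.

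First I would use property (1) to see that $f(z):=\mathbb{E}e^{zX}$ is entire of finite order. Since $a>1$, Young's inequality gives $t\lvert x\rvert\le\frac1a(\epsilon\lvert x\rvert)^{a}+\frac1{a'}(t/\epsilon)^{a'}$ with $a'=a/(a-1)$, for all $t\ge 0$; choosing $\epsilon$ with $\epsilon^{a}/a\le b$ yields $\mathbb{E}e^{t\lvert X\rvert}\le e^{Ct^{a'}}\mathbb{E}e^{b\lvert X\rvert^{a}}<\infty$ for every $t$. Hence $f$ is entire, and $\lvert f(z)\rvert\le\mathbb{E}e^{\lvert\Re z\rvert\,\lvert X\rvert}\le C\exp(c\lvert z\rvert^{a'})$ shows that $f$ has order at most $a'<\infty$. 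Equivalently, $\varphi(z):=\mathbb{E}e^{izX}=f(iz)$ is entire of the same finite order, and $\varphi(z)=0$ iff $f(iz)=0$, so $f$ has only pure imaginary zeros exactly when $\varphi$ has only real zeros.

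Then I would argue by contradiction: suppose $f$ has only pure imaginary zeros, so that $\varphi$ is an entire characteristic function of finite order with only real zeros. Theorem 14.4.2 of \cite{L} (the Goldberg--Ostrovskii theorem \cite{GO}) then forces $\varphi$ to have order at most $2$ and the underlying distribution function $F$ to have a genuinely sub-Gaussian tail: $1-F(r)+F(-r)\le\exp(-cr^{2})$ for some $c>0$ and all large $r$ (in fact the theorem gives the exact asymptotics of $\log(1-F(r)+F(-r))$, which in every admissible case is at least Gaussian in speed). Thus $\mathbb{E}e^{b'X^{2}}<\infty$ for some $b'>0$, contradicting property (2); hence $f=\mathbb{E}e^{zX}$ must vanish somewhere off the imaginary axis, which is the assertion. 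If the cited theorem is stated only for symmetric laws, I would first replace $X$ by $X-X'$ with $X'$ an independent copy of $X$: then $\mathbb{E}e^{z(X-X')}=f(z)f(-z)$ still has only pure imaginary zeros, and $X-X'$ again satisfies (1) and (2) by elementary estimates such as $\mathbb{E}e^{b''\lvert X-X'\rvert^{a}}\le(\mathbb{E}e^{2^{a-1}b''\lvert X\rvert^{a}})^{2}$ and $\mathbb{P}(X-X'>r)\ge c_{0}\,\mathbb{P}(X>r+M)$ for suitable $c_{0},M>0$.

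I expect the one nonroutine point to be extracting the precise quantitative form of the quoted theorem. The bare statement ``order $\le 2$'' does not suffice, since an entire characteristic function of order exactly $2$ but infinite type can correspond to a tail like $\exp(-r^{2}/\log r)$, which is slower than Gaussian and fully compatible with (2). What is actually used is that the only-real-zeros hypothesis, together with the ridge inequality $\lvert\varphi(x+iy)\rvert\le\varphi(iy)$ automatic for a characteristic function, forces the finite-type, essentially Gaussian order-$2$ behavior and hence a bona fide sub-Gaussian tail. So the main work in writing the proof out is to state Theorem 14.4.2 of \cite{L} in this sharp form and to check that hypotheses (1)--(2) match its hypotheses: entireness and finite order of $\varphi$ coming from (1), and the real-zeros condition coming from the assumed pure-imaginary-zeros property via $z\mapsto iz$.
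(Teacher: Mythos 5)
Your proposal is correct and follows essentially the same route as the paper: both establish that $f(z)=\mathbb{E}e^{zX}$ is entire of finite order $a'=a/(a-1)$ via Young's inequality and then invoke the Goldberg--Ostrovskii theorem (Theorem 14.4.2 of \cite{L}) to conclude that having only pure imaginary zeros would force a sub-Gaussian tail, contradicting hypothesis (2). The paper extracts that sub-Gaussian bound from the explicit product representation $f(z)=e^{\alpha z}e^{Bz^{2}}\prod_{k}\left(1+z^{2}/y_{k}^{2}\right)$ with $B\geq 0$ and $\sum_{k}y_{k}^{-2}<\infty$ (so that $X-\mathbb{E}X$ lands in the class $\mathcal{L}$), which is precisely the finite-type refinement you correctly identify as the essential content beyond ``order at most $2$.''
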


\begin{remark}
A natural question is how much Property (1) of this theorem can be weakened
without changing the conclusions. The answer is not very much. This can be
seen by constructing examples of random variables $X$ where $\mathbb{E}%
\left( e^{b\left\vert X\right\vert \log \left\vert X\right\vert }\right)
<\infty $ for some $b>0$ but $\mathbb{E}\left[ e^{zX}\right] $ has no zeros
that are not purely imaginary. Probably the simplest example is a Poisson
random variable $X$, where $\mathbb{E}\left[ e^{zX}\right] =\exp \left(
\lambda \left( e^{z}-1\right) \right) $ has no zeros at all. One can also
extend this example to obtain symmetric random variables with Poisson-type
tail behavior whose moment generating functions do have many zeros, all
purely imaginary.
\end{remark}

We now turn to the proof of Theorems \ref{weak} and \ref{slowtail2}. The
proof of Theorem \ref{weak} is based on the uniform convergence of entire
functions and an application of Hurwitz Theorem. The proof will be given in
two steps. In the first step we prove Theorem \ref{weak} under the
additional assumption\ that $\sup_{n}\mathbb{E}\left[ X_{n}^{2}\right]
<\infty $. In the second step we show $\sup_{n}\mathbb{E}\left[ X_{n}^{2}%
\right] <\infty $ automatically holds.

The key ingredient for the proof is the following Proposition (see
Proposition 2 of \cite{Ne2}). We include a proof for completeness.

\begin{proposition}[\protect\cite{Ne2}]
\label{entire}Suppose $X\in \mathcal{L}$. Then $f\left( z\right) :=\mathbb{E}%
\left[ \exp \left( zX\right) \right] $ is an entire function of $z$ with
product expansion 
\begin{equation*}
f\left( z\right) =e^{Bz^{2}}\prod_{k}\left( 1+\frac{z^{2}}{y_{k}^{2}}\right)
,
\end{equation*}%
where $B\geq 0$, $y_{k}\in \mathbb{R}$ such that $\sum_{k}\frac{1}{y_{k}^{2}}%
<\infty $. Also $\mathbb{E}\left[ X^{2}\right] =$Var$\left[ X\right]
=2\left( B+\sum_{k}\frac{1}{y_{k}^{2}}\right) $.
\end{proposition}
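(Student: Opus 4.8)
The plan is to establish the product expansion by identifying $f(z) = \mathbb{E}[e^{zX}]$ as an entire function of order at most $2$, then invoking the Hadamard factorization theorem and using the hypotheses of Definition \ref{LYp} to pin down the form of the factors. First I would check that $f$ is entire: condition (2) of Definition \ref{LYp} gives $\mathbb{E}[e^{bX^2}] < \infty$ for some $b > 0$, and since for every $z \in \mathbb{C}$ and every $\varepsilon > 0$ we have $|zX| \le \varepsilon X^2 + |z|^2/(4\varepsilon)$, choosing $\varepsilon < b$ shows $\mathbb{E}[e^{|zX|}] < \infty$ for all $z$; dominated convergence then shows $z \mapsto \mathbb{E}[e^{zX}]$ is complex-differentiable everywhere, hence entire. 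The same bound $|f(z)| \le \mathbb{E}[e^{|z||X|}] \le e^{|z|^2/(4\varepsilon)}\,\mathbb{E}[e^{\varepsilon X^2}]$ shows $f$ has order at most $2$ (in fact $\le 2$ with finite type along the relevant comparison), so Hadamard's theorem applies.

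Next I would apply Hadamard factorization: an entire function of order $\le 2$ with $f(0) = 1$ factors as $f(z) = e^{a z + B z^2} \prod_k E_2(z/z_k)$, where $\{z_k\}$ are the zeros (with multiplicity), $\sum_k |z_k|^{-3} < \infty$, and $E_2$ is the Weierstrass elementary factor of genus $2$. Now I use the structural hypotheses. Condition (1), symmetry of $X$, forces $f$ to be an even function: $f(-z) = \mathbb{E}[e^{-zX}] = \mathbb{E}[e^{zX}] = f(z)$. Evenness kills the linear term, so $a = 0$, and it forces the zero set to be symmetric under $z \mapsto -z$. Condition (3), the pure-imaginary-zeros property, says every zero $z_k$ lies on the imaginary axis; combined with the symmetry of the zero set, the zeros come in pairs $\pm i y_k$ with $y_k \in \mathbb{R} \setminus \{0\}$ (we may also have to note $f(0)=1 \ne 0$ so $0$ is not a zero). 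Pairing the factor for $iy_k$ with that for $-iy_k$, the genus-$2$ exponential correction terms cancel (the odd parts cancel by symmetry, and one checks the quadratic corrections can be absorbed into the $e^{Bz^2}$ factor), leaving $\prod_k (1 + z^2/y_k^2)$. The summability $\sum_k |z_k|^{-3} < \infty$ upgrades to $\sum_k y_k^{-2} < \infty$ because condition (2) gives the stronger tail control — concretely, $\log|f(iy)| = \log \mathbb{E}[\cos(yX)] $ is bounded above, but more usefully one reads off $\sum_k y_k^{-2}$ from the coefficient of $z^2$ in $\log f$, which is finite since $\mathbb{E}[X^2] < \infty$ (itself a consequence of condition (2)).

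Finally I would extract the moment identity. Taking logarithms, $\log f(z) = Bz^2 + \sum_k \log(1 + z^2/y_k^2) = Bz^2 + \sum_k (z^2/y_k^2) + O(z^4)$ near $z = 0$, while on the other hand $\log f(z) = \log \mathbb{E}[e^{zX}] = \tfrac{1}{2}\mathbb{E}[X^2]\, z^2 + O(z^4)$ by symmetry (the cumulant expansion, with vanishing first cumulant). Matching the coefficient of $z^2$ yields $\mathbb{E}[X^2] = \mathrm{Var}[X] = 2(B + \sum_k y_k^{-2})$, and since $B \ge 0$ and $\mathbb{E}[X^2] < \infty$ this also confirms $\sum_k y_k^{-2} < \infty$ a posteriori. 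The main obstacle is the bookkeeping in the middle step — justifying that the Hadamard genus-$2$ correction factors genuinely collapse into a single $e^{Bz^2}$ with $B \ge 0$ once the zeros are known to be purely imaginary and symmetric. One clean way around this is to avoid genus-$2$ factors altogether: use condition (2) to argue directly that $\sum_k y_k^{-2} < \infty$ (via the second-moment bound), so the canonical product $\prod_k(1 + z^2/y_k^2)$ already converges and has genus $0$; then $f(z)/\prod_k(1+z^2/y_k^2)$ is a zero-free entire function of order $\le 2$, hence equals $e^{g(z)}$ with $g$ a polynomial of degree $\le 2$, and symmetry plus $f(0)=1$ force $g(z) = Bz^2$; the sign $B \ge 0$ then follows either from $\mathbb{E}[X^2] = 2(B + \sum_k y_k^{-2}) \ge 2\sum_k y_k^{-2}$ being the actual variance, or from examining the growth of $f$ along the real axis.
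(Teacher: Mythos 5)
Your proposal is correct and follows essentially the same route as the paper: complete the square to get $\left\vert f\left( z\right) \right\vert \leq C\exp \left( \left\vert z\right\vert ^{2}/\left( 4b\right) \right) $, apply the Hadamard factorization theorem, use evenness and the pure-imaginary-zeros hypothesis to reduce the factorization to $e^{Bz^{2}}\prod_{k}\left( 1+z^{2}/y_{k}^{2}\right) $, and read off $\mathbb{E}\left[ X^{2}\right] =2\left( B+\sum_{k}y_{k}^{-2}\right) $ from the second-order coefficient at the origin. The only real difference is that the paper invokes the finite-type (genus-one) form of Hadamard's theorem, so that $\sum_{j}\left\vert z_{j}\right\vert ^{-2}<\infty $ is available at once and the exponential correction factors cancel in conjugate pairs, which disposes of the genus-two bookkeeping you flag as the main obstacle (and also supplies the convergence of $\sum_{k}y_{k}^{-2}$ that your alternative genus-zero route needs up front).
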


\begin{proof}
We first note that $f$ is entire of (exponential) order $2$ (and finite
type), since $zX=bX^{2}-b\left( X-z/2b\right) ^{2}+z^{2}/4b$ and so 
\begin{equation*}
\left\vert f\left( z\right) \right\vert \leq \mathbb{E}\left[ \exp \left(
bX^{2}+\frac{\left\vert z\right\vert ^{2}}{4b}\right) \right] .
\end{equation*}%
By (2) of Definition \ref{LYp}, there is some $C<\infty $ such that $%
\left\vert f\left( z\right) \right\vert \leq C\exp \left( \left( 4b\right)
^{-1}\left\vert z\right\vert ^{2}\right) $. By the Hadamard factorization
theorem, 
\begin{equation*}
f\left( z\right) =e^{P_{2}\left( z\right) }z^{m_{0}}\prod_{j}\left( 1-\frac{z%
}{z_{j}}\right) e^{z/z_{j}},
\end{equation*}%
where $P_{2}$ is a quadratic polynomial, $m_{0}$ is the degree of zero of $f$
at the origin, $\left\{ z_{j}\right\} $ are the other zeros and $%
\sum_{j}\left\vert z_{j}\right\vert ^{-2}<\infty $. Since $\mu $ is a
symmetric probability measure, $f\left( z\right) =f\left( -z\right) $ with $%
f\left( 0\right) =1$, and $f$ only has pure imaginary zeros, we have $%
m_{0}=0 $, $P_{2}\left( z\right) =Bz^{2}$ and $\left\{ z_{j}\right\} $ come
in pairs $\left\{ \pm iy_{k}\right\} $. Combining the pairs gives 
\begin{equation*}
f\left( z\right) =e^{Bz^{2}}\prod_{k}\left( 1+\frac{z^{2}}{y_{k}^{2}}\right)
\end{equation*}%
and 
\begin{equation*}
\mathbb{E}\left[ X^{2}\right] =f^{\prime \prime }\left( 0\right) =2\left(
B+\sum_{k}\frac{1}{y_{k}^{2}}\right) .
\end{equation*}
\end{proof}

\begin{proof}[Proof of Theorem \protect\ref{weak}]
Let $f_{n}\left( z\right) =\mathbb{E}\left[ \exp \left( zX_{n}\right) \right]
$. We first prove Theorem \ref{weak} assuming\newline
$\sup_{n}\mathbb{E}\left[ X_{n}^{2}\right] <\infty $. We claim that it
suffices to prove 
\begin{equation}
\sup_{n}\left\vert f_{n}\left( z\right) \right\vert <\infty \text{ uniformly
on compact sets of }z  \label{unif1}
\end{equation}%
and that%
\begin{equation}
\sup_{n}\mathbb{E}\left[ \exp \left( b^{\prime }X_{n}^{2}\right) \right]
<\infty \text{ for some fixed }b^{\prime }>0\text{.}  \label{unif2}
\end{equation}%
We now explain why (\ref{unif1}) and (\ref{unif2}) suffice to imply the
conclusion of Theorem \ref{weak}. First note that the validity of (1) of
Definition \ref{LYp} for each $\mu _{n}$ implies it for $\mu $. As $X_{n}$
converges in distribution to $X$, $f_{n}\rightarrow f$ on the pure imaginary
axis, and (\ref{unif1}) implies that $f$ extends to an entire function with $%
f_{n}\rightarrow f$ uniformly on compact sets. Moreover, by Hurwitz'
Theorem, open zero-free regions for all $f_{n}$ (e.g., $\mathbb{C}\backslash 
\mathbf{i}\mathbb{R}$) are zero-free for $f$. This verifies (3) of
Definition \ref{LYp} for $X$. Finally, (\ref{unif2}) implies (2) of
Definition \ref{LYp} for $X$ (e.g., by taking any $b\in \left( 0,b^{\prime
}\right) $) and thus $X\in \mathcal{L}$.

We next claim that (\ref{unif1}) and (\ref{unif2}) are direct consequences
of Proposition \ref{entire}. Apply Proposition \ref{entire} and use the fact
that $\left\vert 1+z^{2}/y^{2}\right\vert \leq \exp \left( \left\vert
z\right\vert ^{2}/y^{2}\right) $ to see that 
\begin{equation}
\left\vert f_{n}\left( z\right) \right\vert \leq \exp \left[ \left(
B^{\left( n\right) }+\sum_{k}\frac{1}{\left( y_{k}^{\left( n\right) }\right)
^{2}}\right) \left\vert z\right\vert ^{2}\right] =\exp \left( \frac{1}{2}%
\text{Var}\left[ X_{n}\right] \left\vert z\right\vert ^{2}\right) ,
\label{mgf}
\end{equation}%
where $B^{\left( n\right) }$ and $\left\{ y_{k}^{\left( n\right) }\right\} $
for $X_{n}$ correspond to $B$ and $\left\{ y_{k}\right\} $ for $X$ in
Proposition \ref{entire}. Since we assumed $\sup_{n}\mathbb{E}\left[
X_{n}^{2}\right] <\infty $, we conclude (\ref{unif1}). To prove (\ref{unif2}%
), note that (\ref{mgf}) implies that the tail of $X_{n}$ is dominated by
the tail of $Y_{n}\sim \mathcal{N}\left( 0,\text{Var}\left[ X_{n}\right]
\right) $. Therefore we conclude (\ref{unif2}) with any $b^{\prime }<\left(
2\sup_{n}\mathbb{E}\left[ X_{n}^{2}\right] \right) ^{-1}$.

Finally we prove that convergence of $X_{n}$ to some $X$ in distribution
implies that \newline
$\sup_{n}\mathbb{E}\left[ X_{n}^{2}\right] <\infty $. We will argue by
contradiction. Suppose that $X_{n}\in \mathcal{L}$, $\sup_{n}\mathbb{E}\left[
X_{n}^{2}\right] =\infty $ and $X_{n}$ converges to some $X$ in
distribution. By taking a subsequence and applying Proposition \ref{entire}
we may assume that 
\begin{equation}
\frac{1}{2}\mathbb{E}\left[ X_{n}^{2}\right] =B^{\left( n\right) }+\sum_{k}%
\frac{1}{\left( y_{k}^{\left( n\right) }\right) ^{2}}\rightarrow \infty 
\text{.}  \label{div}
\end{equation}%
We also know from the convergence in distribution of $X_{n}$ that $%
f_{n}\left( it\right) \rightarrow f\left( it\right) $ uniformly on compact
subsets of $t\in \mathbb{R}$. Both $f_{n}\left( it\right) $ and $f\left(
it\right) $ are real and continuous in $t$ so that there exists $\varepsilon
>0$ such that $f\left( it\right) >0$ for $t\in \left[ 0,\varepsilon \right] $%
. Since $f_{n}\left( iy_{1}^{\left( n\right) }\right) =0$, we must have $%
\liminf_{n\rightarrow \infty }y_{1}^{\left( n\right) }\geq \varepsilon $.
However, by Proposition \ref{entire}, for $t\in (0,y_{1}^{\left( n\right) }]$%
, 
\begin{equation*}
f_{n}\left( it\right) =e^{-B^{\left( n\right) }t^{2}}\prod_{k}\left( 1-\frac{%
t^{2}}{\left( y_{k}^{\left( n\right) }\right) ^{2}}\right) \leq \exp \left(
\left( -B^{\left( n\right) }-\sum_{k}\frac{1}{\left( y_{k}^{\left( n\right)
}\right) ^{2}}\right) t^{2}\right) .
\end{equation*}%
By (\ref{div}) this goes to zero as $n\rightarrow \infty $. This contradicts 
$f\left( it\right) >0$ for $t\in \left[ 0,\varepsilon \right] $ and
completes the proof.
\end{proof}

Finally we note that Theorem \ref{slowtail2} is an immediate consequence of
the following proposition, which follows from the Goldberg-Ostrovskii result 
\cite{GO} stated as Theorem 14.4.2 in \cite{L}. We\ also note that there is
a typographical error in \cite{L} and $\sum_{k}a_{k}^{2}<\infty $ should be
replaced there by $\sum_{k}\left( 1/a_{k}^{2}\right) <\infty $.

\begin{proposition}
\label{slowtail} Suppose that the random variable $Y$ satisfies Property 1
of Theorem \ref{slowtail2}, and $\mathbb{E}e^{zY}$ has only pure imaginary
zeroes. Then $Y-\mathbb{E}Y$ belongs to the class $\mathcal{L}$.
\end{proposition}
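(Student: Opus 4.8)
The plan is to extract the canonical factorization of the moment generating function of $Y$ from the Goldberg--Ostrovskii theorem and then simply read off the three defining properties of the class $\mathcal{L}$ for the centered variable $Y-\mathbb{E}Y$. First I would record the analytic consequences of Property 1 of Theorem \ref{slowtail2}. Since $\mathbb{E}e^{b|Y|^{a}}<\infty$ with $a>1$, a routine comparison (for $|Y|$ large, $c|Y|\le b|Y|^{a}$) shows $\mathbb{E}e^{c|Y|}<\infty$ for every $c>0$, so $g(z):=\mathbb{E}e^{zY}=\int e^{zx}\,d\mu(x)$ converges for all $z\in\mathbb{C}$ and defines an entire function. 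Moreover, Young's inequality with conjugate exponents $a$ and $a'=a/(a-1)$ gives $|z|\,|Y|\le b|Y|^{a}+c|z|^{a'}$, whence $|g(z)|\le\big(\mathbb{E}e^{b|Y|^{a}}\big)e^{c|z|^{a'}}$, so $g$ is of finite order. In particular $\mathbb{E}|Y|<\infty$, so the mean $m:=\mathbb{E}Y$ is well defined.

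Next I would pass to the characteristic function and invoke Goldberg--Ostrovskii. Writing $\varphi(t)=\mathbb{E}e^{itY}$ for the characteristic function of $Y$, we have $g(z)=\varphi(-iz)$, and a zero of $g$ at $z_{0}$ corresponds to a zero of $\varphi$ at $-iz_{0}$; hence ``$\,g$ has only pure imaginary zeros'' is equivalent to ``$\,\varphi$ has only real zeros.'' Thus $\varphi$ is an entire characteristic function of finite order all of whose zeros are real, and Theorem 14.4.2 of \cite{L} (the Goldberg--Ostrovskii result, with the correction $\sum_{k}1/a_{k}^{2}<\infty$ noted in the excerpt) yields the canonical representation
\[
\varphi(t)=e^{-\gamma t^{2}+i\beta t}\prod_{k}\Big(1-\tfrac{t}{a_{k}}\Big)e^{t/a_{k}},
\]
with $\gamma\ge0$, $\beta\in\mathbb{R}$, $a_{k}\in\mathbb{R}\setminus\{0\}$ and $\sum_{k}1/a_{k}^{2}<\infty$ (the product being finite, possibly empty, if $\varphi$ has finitely many zeros). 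Since $\varphi(-t)=\overline{\varphi(t)}$, the real zeros occur in pairs $\pm a_{k}$; grouping those factors and substituting $t=-iz$ produces
\[
g(z)=e^{\gamma z^{2}+\beta z}\prod_{k}\Big(1+\tfrac{z^{2}}{a_{k}^{2}}\Big).
\]

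Finally I would center and verify membership in $\mathcal{L}$. Differentiating the last display at $z=0$ (legitimate, as $g$ is entire) gives $\beta=g'(0)=\mathbb{E}Y=m$, so $\widetilde g(z):=\mathbb{E}e^{z(Y-m)}=e^{-mz}g(z)=e^{\gamma z^{2}}\prod_{k}(1+z^{2}/a_{k}^{2})$, which is an even function of $z$. Hence $Y-m$ has the same distribution as $-(Y-m)$, giving Property (1) of Definition \ref{LYp}; the zeros of $\widetilde g$ lie at $z=\pm i a_{k}$ and are purely imaginary, giving Property (3). For Property (2), the elementary bound $|1+z^{2}/a_{k}^{2}|\le e^{|z|^{2}/a_{k}^{2}}$ gives $\widetilde g(z)\le e^{K|z|^{2}}$ with $K=\gamma+\sum_{k}1/a_{k}^{2}<\infty$; applying this for real $z$ together with the exponential Markov inequality and optimization yields the sub-Gaussian tail $\mathbb{P}(|Y-m|>r)\le 2e^{-r^{2}/(4K)}$, and integrating this bound shows $\mathbb{E}e^{b(Y-m)^{2}}<\infty$ for any $0<b<1/(4K)$. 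Thus $Y-m\in\mathcal{L}$. The one genuinely nontrivial ingredient is the Goldberg--Ostrovskii theorem itself: one must use the fact that, for an entire characteristic function, finite order together with all zeros real forces the order-two Laguerre--P\'olya structure above (in particular $\gamma\ge0$ and $\sum_{k}1/a_{k}^{2}<\infty$), rather than merely a Hadamard factorization of possibly higher genus. Everything else --- the dictionary between pure imaginary zeros of $g$ and real zeros of $\varphi$, the identification $\beta=m$, and the sub-Gaussian tail estimate --- is bookkeeping.
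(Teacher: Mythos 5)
Your proof is correct and follows essentially the same route as the paper's: Young's inequality shows $\mathbb{E}e^{zY}$ is entire of finite order $a'$, the Goldberg--Ostrovskii theorem then gives the Laguerre--P\'olya-type factorization $e^{\alpha z}e^{Bz^{2}}\prod_{k}(1+z^{2}/y_{k}^{2})$, and the three properties of $\mathcal{L}$ for $Y-\mathbb{E}Y$ are read off from it. The only difference is that you spell out the bookkeeping (the dictionary between pure imaginary zeros of the moment generating function and real zeros of the characteristic function, the pairing of zeros via $\varphi(-t)=\overline{\varphi(t)}$, the identification of the linear coefficient with $\mathbb{E}Y$, and the sub-Gaussian tail estimate) that the paper states without detail.
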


\begin{proof}
The proof follows directly from a result of Goldberg and Ostrovskii (see
Theorem 14.4.2 of \cite{L}). Arguing as in the proof of Proposition \ref%
{entire}, let $f\left( z\right) =\mathbb{E}e^{zY}$; then by Young's
inequality, 
\begin{equation*}
\left\vert f\left( z\right) \right\vert \leq \mathbb{E}\left[ \exp \left(
b\left\vert Y\right\vert ^{a}+\frac{1}{aa^{\prime }b}\left\vert z\right\vert
^{a^{\prime }}\right) \right] \text{, \ where }a^{\prime }=\left(
1-a^{-1}\right) ^{-1}.
\end{equation*}%
By Property 1 of Theorem \ref{slowtail2} we have that 
\begin{equation*}
\left\vert f\left( z\right) \right\vert \leq C\exp \left( \frac{1}{%
aa^{\prime }b}\left\vert z\right\vert ^{a^{\prime }}\right) \text{, for some 
}C<\infty \text{.}
\end{equation*}%
Therefore $f$ is an entire function of finite (exponential) order $a^{\prime
}$.

By the Goldberg-Ostrovskii result it follows that%
\begin{equation*}
f\left( z\right) =e^{\alpha z}e^{Bz^{2}}\prod_{k}\left( 1+\left( \frac{z}{%
y_{k}}\right) ^{2}\right) ,
\end{equation*}%
with $\alpha \in \mathbb{R}$, $B\geq 0$ and $\sum_{k}\left\vert
y_{k}\right\vert ^{-2}<\infty $. But this implies that $\mathbb{E}Y=\alpha $
and $Y-\mathbb{E}Y$ satisfies all the properties to be in $\mathcal{L}$.
\end{proof}

\section{$1D$ XY Spin Chain\label{Sec:V}}

In this and the next sections we will show that the distribution of (\ref%
{lam}) from the Villain model is of Lee-Yang type as defined in Definition %
\ref{LYp}, which implies Theorem \ref{V}. The proof uses the following
scaling limit result for a one dimensional XY model.

Let $G_{n}$ denote the graph whose vertex set $\mathcal{V}\left(
G_{n}\right) $ is $\left\{ 0,1/n,2/n,...,1\right\} $ and edge set is $%
\left\{ \left\{ \left( j-1\right) /n,j/n\right\} :j=1,...,n\right\} $. We
assign to each $i\in \mathcal{V}\left( G_{n}\right) $ a spin variable $%
S_{i}^{n}$ in the unit circle, with the corresponding angle $\theta
_{i}^{n}\in (-\pi ,\pi ]$, and consider the XY model on $G_{n}$ defined by
the Gibbs measure 
\begin{equation}
Z_{G_{n}}^{-1}\exp \left( -\frac{1}{T_{n}}H\left( S^{n}\right) \right)
\prod_{i\in \mathcal{V}(G_{n})}\delta (\left\vert S_{i}^{n}\right\vert =1),
\label{xy}
\end{equation}%
with Hamiltonian 
\begin{equation}
H\left( S^{n}\right) =-\sum_{\left( i,j\right) \in \mathcal{E}\left(
G_{n}\right) }S_{i}^{n}\cdot S_{j}^{n}=-\sum_{\left( i,j\right) \in \mathcal{%
E}\left( G_{n}\right) }\cos \left( \theta _{i}^{n}-\theta _{j}^{n}\right) .
\label{H1dXY}
\end{equation}%
For the remainder of this section, we write $\mathcal{B}$ or $\mathcal{B}%
_{n} $ for $1/T$ or $1/T_{n}$.

We note that in the next proposition, the parameter $b>0$ will eventually be
proportional to one of the $J_{e}$'s in (\ref{Ve}) of the Villain model ---
see Remark \ref{heat} below.

\begin{proposition}
\label{1dXY}If $\mathcal{B}_{n}/n\rightarrow b\in \left( 0,\infty \right) $,
then $S_{\left[ nt\right] }^{n}$ converges in distribution (using a Skorohod
metric) to $S\left( t\right) =\left( \cos \Phi \left( t\right) ,\sin \Phi
\left( t\right) \right) $, where $\Phi \left( 0\right) $ is uniformly
distributed in $[-\pi ,\pi )$ and $\exp \left( i\left( \Phi \left( t\right)
-\Phi \left( 0\right) \right) \right) $ is distributed as $\exp \left(
iB\left( t\right) /\sqrt{b}\right) $, where $B\left( t\right) $ is a
standard one-dimensional Brownian motion.
\end{proposition}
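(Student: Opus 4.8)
The plan is to prove the convergence statement by first understanding the law of the increments of the angular process in the one-dimensional XY chain exactly, and then passing to the scaling limit. The key observation is that the chain $\left(\theta_0^n,\ldots,\theta_n^n\right)$ distributed by (\ref{xy})--(\ref{H1dXY}) is a Markov chain: conditionally on $\theta_{i-1}^n$, the increment $\theta_i^n-\theta_{i-1}^n$ has density proportional to $\exp\left(\mathcal{B}_n\cos\left(\theta_i^n-\theta_{i-1}^n\right)\right)$ on $(-\pi,\pi]$, independently of everything earlier, because the Hamiltonian (\ref{H1dXY}) is a sum of nearest-neighbor terms and the free boundary condition makes the product measure telescope. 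Equivalently, $\exp\left(i\theta_i^n\right)=\exp\left(i\theta_0^n\right)\prod_{k=1}^i e^{i\xi_k^{(n)}}$, where $\xi_1^{(n)},\xi_2^{(n)},\ldots$ are i.i.d.\ with the von Mises density $c_n^{-1}\exp\left(\mathcal{B}_n\cos\xi\right)$ on $(-\pi,\pi]$, and $\theta_0^n$ is uniform on $(-\pi,\pi]$ and independent of the $\xi_k^{(n)}$. Thus the process $S^n_{[nt]}=\left(\cos\theta^n_{[nt]},\sin\theta^n_{[nt]}\right)$ is a random walk on the circle with i.i.d.\ steps, and the problem reduces to a Donsker-type statement for the partial-sum process of the $\xi_k^{(n)}$, read through the exponential map.

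Next I would analyze the step distribution in the regime $\mathcal{B}_n\to\infty$ with $\mathcal{B}_n/n\to b$. When $\mathcal{B}_n$ is large, the von Mises density concentrates near $0$, and after rescaling $\xi=\eta/\sqrt{\mathcal{B}_n}$ one has $\mathcal{B}_n\left(1-\cos\xi\right)=\tfrac12\eta^2+O\!\left(\eta^4/\mathcal{B}_n\right)$, so $\sqrt{\mathcal{B}_n}\,\xi_k^{(n)}$ converges in distribution to a standard Gaussian; more precisely $\mathbb{E}\,\xi_k^{(n)}=0$ by symmetry and $\mathrm{Var}\,\xi_k^{(n)}=\mathcal{B}_n^{-1}\left(1+o(1)\right)$, with all higher moments controlled (e.g.\ $\mathbb{E}\,(\xi_k^{(n)})^4=O\!\left(\mathcal{B}_n^{-2}\right)$), and the tails decay fast enough for a Lindeberg-type argument. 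Since $\mathcal{B}_n\sim bn$, the sum $\Theta^n_{[nt]}-\theta_0^n:=\sum_{k=1}^{[nt]}\xi_k^{(n)}$ has variance $\sim [nt]/\mathcal{B}_n\to t/b$. By Donsker's invariance principle (in the triangular-array form, using the uniform integrability / Lindeberg condition supplied by the moment bounds above), the process $t\mapsto\Theta^n_{[nt]}-\theta_0^n$ converges in distribution, in the Skorohod topology on $D[0,\infty)$, to $\left(B(t)/\sqrt b\right)_{t\ge0}$ where $B$ is a standard Brownian motion. Since the exponential map $x\mapsto e^{ix}$ is continuous and bounded, the continuous-mapping theorem gives that $\exp\left(i\left(\Theta^n_{[nt]}-\theta_0^n\right)\right)$ converges in distribution to $\exp\left(iB(t)/\sqrt b\right)$; adding the independent uniform phase $e^{i\theta_0^n}$ (whose law does not depend on $n$) and using that multiplication by a fixed independent circle-valued random variable is continuous, we obtain the convergence of $S^n_{[nt]}$ to $S(t)=\left(\cos\Phi(t),\sin\Phi(t)\right)$ with $\Phi(0)$ uniform and $\Phi(t)-\Phi(0)$ equal in law to $B(t)/\sqrt b$, as claimed.

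The main obstacle I anticipate is the bookkeeping needed to make the invariance principle rigorous for a triangular array of circle-valued increments: one must be careful that the object converging is the \emph{lifted} (real-valued) partial sum $\sum_k\xi_k^{(n)}$ with $\xi_k^{(n)}\in(-\pi,\pi]$, not its reduction mod $2\pi$, so that Donsker applies directly; the passage back to the circle is then harmless because $e^{i\cdot}$ is continuous and periodic. The two technical inputs are (i) the uniform moment/Lindeberg estimates for the von Mises law as $\mathcal{B}_n\to\infty$ — these follow from the Laplace-type expansion of $\int\xi^{2m}e^{\mathcal{B}_n\cos\xi}\,d\xi$ around $\xi=0$ — and (ii) checking that tightness in the Skorohod topology holds, which is standard once the increments are i.i.d.\ with the stated variance asymptotics. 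Everything else is a direct application of Donsker's theorem and the continuous mapping theorem, so the proof is essentially the exact Markov/product structure of the XY chain plus a concentration analysis of the von Mises distribution.
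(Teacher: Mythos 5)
Your proof is correct, but it proceeds by a genuinely different mechanism than the paper's. The paper treats $\theta^{n}$ as a Markov chain with transition kernel $K_{\mathcal{B}}\left( \theta ,\theta ^{\prime }\right) \propto e^{\mathcal{B}\cos \left( \theta ^{\prime }-\theta \right) }$ and proves the scaling limit by showing generator convergence, $n\left( \mathcal{K}_{\mathcal{B}_{n}}-I\right) f\rightarrow \frac{1}{2b}f^{\prime \prime }$, via Laplace's method, then invokes a standard Markov-chain-to-diffusion convergence theorem. You instead exploit the stronger structural fact that for the free-boundary one-dimensional chain the Gibbs measure factorizes exactly: $\theta_{0}^{n}$ is uniform and the increments $\xi_{k}^{(n)}$ are i.i.d.\ von Mises, so the problem becomes a triangular-array Donsker theorem for the lifted partial sums, with the variance asymptotic $\mathrm{Var}\,\xi_{k}^{(n)}\sim \mathcal{B}_{n}^{-1}$ and the Lindeberg condition supplied by Laplace-type moment estimates, followed by the continuous mapping theorem through $x\mapsto e^{ix}$. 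Both routes reduce to a Laplace expansion of von Mises integrals; your i.i.d.-increment route is more elementary and makes tightness essentially automatic, while the paper's generator route does not rely on independence of increments and therefore adapts more directly to the Dirichlet-boundary version (Proposition \ref{1dXY'}) and to inhomogeneous couplings, which the paper needs later. Your handling of the one genuine subtlety --- that Donsker must be applied to the real-valued lifted sum rather than its reduction mod $2\pi$, with the circle recovered only afterwards via the continuous periodic map --- is exactly right, and the factorization claim itself is valid because the change of variables from $\left(\theta_{0}^{n},\ldots,\theta_{n}^{n}\right)$ to $\left(\theta_{0}^{n},\xi_{1}^{(n)},\ldots,\xi_{n}^{(n)}\right)$ has unit Jacobian and telescopes the nearest-neighbor Hamiltonian.
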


\begin{remark}
\label{heat}This implies that the probability density on $(-\pi ,\pi ]$ of $%
\Phi \left( t\right) -\Phi \left( 0\right) $ is 
\begin{equation*}
\sum_{m\in \mathbb{Z}}\frac{1}{\sqrt{2\pi t}}\exp \left( -\frac{b\left(
\theta +2\pi m\right) ^{2}}{2t}\right) ,
\end{equation*}%
which is proportional to $V_{e}$ defined in (\ref{Ve}) with $J_{e}=b/t$.
\end{remark}

\begin{proof}
One can view $\theta ^{n}$ with $S^{n}=\left( \cos \theta ^{n},\sin \theta
^{n}\right) $ as a one dimensional Markov process with transition density
given by 
\begin{equation*}
K_{\mathcal{B}}\left( \theta ,\theta ^{\prime }\right) =\frac{e^{\mathcal{B}%
\cos \left( \theta ^{\prime }-\theta \right) }}{\int_{-\pi }^{\pi }e^{%
\mathcal{B}\cos \phi }d\phi }\text{ \ for }\theta ,\theta ^{\prime }\in
(-\pi ,\pi ]
\end{equation*}%
and initial distribution uniform on $(-\pi ,\pi ]$. We provide a sketch of a
proof of the scaling limit result. Based on a standard convergence result
for discrete time Markov chains (see, e.g., Theorem 17.28 of \cite{Kal}), it
suffices to prove that for any twice differentiable function $f:\mathbb{S}%
^{1}\mapsto \mathbb{R}$, the transition operator $\mathcal{K}_{\mathcal{B}}$
defined by%
\begin{equation*}
\mathcal{K}_{\mathcal{B}}f\left( \theta \right) =\int_{-\pi }^{\pi }K_{%
\mathcal{B}}\left( \theta ,\theta ^{\prime }\right) f\left( \theta ^{\prime
}\right) d\theta ^{\prime },
\end{equation*}%
satisfies 
\begin{equation*}
\frac{1}{1/n}\left( \mathcal{K}_{\mathcal{B}_{n}}-I\right) f\rightarrow 
\frac{1}{2b}\frac{d^{2}}{d\theta ^{2}}f.
\end{equation*}%
Notice that $\frac{1}{2b}\frac{d^{2}}{d\theta ^{2}}$ is the generator of $%
\exp \left( iB\left( t\right) /\sqrt{b}\right) $.

Indeed, since $\mathcal{B}_{n}=O\left( n\right) $, Laplace's method (see
e.g., \cite{Erd}) yields 
\begin{equation*}
\int_{-\pi }^{\pi }e^{\mathcal{B}_{n}\cos \phi }d\phi =e^{\mathcal{B}_{n}}%
\sqrt{\frac{2\pi }{\mathcal{B}_{n}}}\left( 1+\frac{1}{8\mathcal{B}_{n}}%
+O\left( \frac{1}{\mathcal{B}_{n}^{2}}\right) \right) .
\end{equation*}%
Also, the value of $\theta ^{\prime }$ that minimizes $K_{\mathcal{B}}\left(
\theta ,\theta ^{\prime }\right) f\left( \theta ^{\prime }\right) $ is given
by 
\begin{equation*}
-\sin \theta _{m}^{^{\prime }}+\frac{1}{\mathcal{B}_{n}}\frac{f^{\prime
}\left( \theta _{m}^{\prime }-\theta \right) }{f\left( \theta _{m}-\theta
\right) }=0,
\end{equation*}%
or 
\begin{equation*}
\theta _{m}^{^{\prime }}=\theta +\frac{1}{\mathcal{B}_{n}}\frac{f^{\prime
}\left( \theta \right) }{f\left( \theta \right) }+O\left( \frac{1}{\mathcal{B%
}_{n}^{2}}\right) .
\end{equation*}%
Let $g\left( \theta ^{\prime }\right) =\mathcal{B}_{n}^{-1}\log \left[ \exp
\left( \mathcal{B}_{n}\cos \left( \theta ^{\prime }-\theta \right) \right)
f\left( \theta ^{\prime }\right) \right] $, then Laplace's method yields%
\begin{equation*}
\int_{-\pi }^{\pi }e^{\mathcal{B}_{n}g\left( \theta ^{\prime }\right)
}d\theta ^{\prime }=e^{\mathcal{B}_{n}\cos \left( \theta _{m}^{\prime
}-\theta \right) }f\left( \theta _{m}^{\prime }\right) \sqrt{\frac{2\pi }{%
\mathcal{B}_{n}g^{\prime \prime }\left( \theta _{m}^{\prime }\right) }}%
\left( 1+\frac{1}{8\mathcal{B}_{n}}+O\left( \frac{1}{\mathcal{B}_{n}^{2}}%
\right) \right) .
\end{equation*}%
In particular, for any $f\in H^{1}\left( \mathbb{S}^{1}\right) $, 
\begin{eqnarray*}
\mathcal{K}_{\mathcal{B}_{n}}f\left( \theta \right)  &=&f\left( \theta
_{m}^{\prime }\right) e^{-\frac{1}{2\mathcal{B}_{n}}\left( f^{\prime
}/f\right) ^{2}}\left( 1-\frac{1}{\mathcal{B}_{n}}\frac{f^{\prime \prime
}\left( \theta \right) }{f\left( \theta \right) }+\frac{1}{\mathcal{B}_{n}}%
\left( \frac{f^{\prime }\left( \theta \right) }{f\left( \theta \right) }%
\right) ^{2}\right) ^{-1/2}\left( 1+O\left( \frac{1}{\mathcal{B}_{n}^{2}}%
\right) \right)  \\
&=&f\left( \theta \right) +\frac{1}{2\mathcal{B}_{n}}f^{\prime \prime
}\left( \theta \right) +O\left( \frac{1}{\mathcal{B}_{n}^{2}}\right) ,
\end{eqnarray*}%
which implies 
\begin{equation*}
\frac{1}{1/n}\left( \mathcal{K}_{\mathcal{B}_{n}}-I\right) f\rightarrow 
\frac{1}{2b}f^{\prime \prime }\left( \theta \right) .
\end{equation*}
\end{proof}

We will also need the following version of Proposition \ref{1dXY} with
Dirichlet boundary conditions. The proof follows from essentially the same
arguments as for Proposition \ref{1dXY} --- see also Remark \ref{heat} for
the last statement of the following proposition.

\begin{proposition}
\label{1dXY'} Consider the XY model on\ $G_{n}$ with Dirichlet boundary
condition, defined by the Gibbs measure%
\begin{equation*}
Z_{G_{n}}^{-1}\left( \theta _{0},\theta _{1}\right) \exp \left( -\mathcal{B}%
_{n}H\left( S^{n}\right) \right) \prod_{i\in \mathcal{V}(G_{n})\backslash
\left\{ 0,1\right\} }d\theta _{i}^{n}\delta \left( \theta _{0}^{n}=\theta
_{0}\right) \delta \left( \theta _{1}^{n}=\theta _{1}\right) ,
\end{equation*}%
for some $\theta _{0},\theta _{1}\in (-\pi ,\pi ]$, where $H\left(
S^{n}\right) $ is defined in (\ref{H1dXY}). If $\mathcal{B}_{n}/n\rightarrow
b\in \left( 0,\infty \right) $, then $S_{\left[ nt\right] }^{n}$ converges
in distribution to $S\left( t\right) =\exp \left( iB\left( t\right) /\sqrt{b}%
\right) $, where $B\left( t\right) $ is the one dimensional Brownian bridge
conditioned on $\exp \left( iB\left( 0\right) /\sqrt{b}\right) =e^{i\theta
_{0}}$, and $\exp \left( iB\left( 1\right) /\sqrt{b}\right) =e^{i\theta
_{1}} $. Moreover, given any $\theta _{0},\theta _{1},\theta _{0}^{\prime
},\theta _{1}^{\prime }\in (-\pi ,\pi ]$, as $n\rightarrow \infty $,%
\begin{equation*}
\frac{Z_{G_{n}}\left( \theta _{0},\theta _{1}\right) }{Z_{G_{n}}\left(
\theta _{0}^{\prime },\theta _{1}^{\prime }\right) }\rightarrow \frac{%
\sum_{m\in \mathbb{Z}}\exp \left( -\frac{1}{2b}\left( \theta _{1}-\theta
_{0}+2\pi m\right) ^{2}\right) }{\sum_{m\in \mathbb{Z}}\exp \left( -\frac{1}{%
2b}\left( \theta _{1}^{\prime }-\theta _{0}^{\prime }+2\pi m\right)
^{2}\right) }.
\end{equation*}
\end{proposition}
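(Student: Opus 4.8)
The strategy is to reduce both assertions to the analysis already carried out in the proof of Proposition \ref{1dXY}, treating the Dirichlet XY chain as the same nearest-neighbor Markov kernel $K_{\mathcal{B}_n}$ but now run as a \emph{bridge}: the law of $(\theta_1^n,\dots,\theta_{n-1}^n)$ given $\theta_0^n=\theta_0$, $\theta_n^n=\theta_1$ is exactly the law of the Markov chain with kernel $K_{\mathcal{B}_n}$ started at $\theta_0$ and conditioned to be at $\theta_1$ at time $n$. First I would record the explicit Radon–Nikodym identity
\[
Z_{G_n}(\theta_0,\theta_1)=\Big(\int_{-\pi}^{\pi}e^{\mathcal{B}_n\cos\phi}\,d\phi\Big)^{\!n} K_{\mathcal{B}_n}^{(n)}(\theta_0,\theta_1),
\]
where $K_{\mathcal{B}_n}^{(n)}$ is the $n$-step transition density on the circle; this single formula underlies \emph{both} parts of the proposition, since the ratio of partition functions is just the ratio of $n$-step densities and the bridge law is obtained from the unconditioned chain by Bayes' rule with the same normalizing factor. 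The first (process-convergence) statement then follows from Proposition \ref{1dXY} together with the generator computation already done there: since $\tfrac{1}{1/n}(\mathcal{K}_{\mathcal{B}_n}-I)f\to \tfrac{1}{2b}f''$, the discrete chains converge to the diffusion $\exp(iB(t)/\sqrt b)$ in the sense of Skorohod, and conditioning a convergent sequence of Markov processes on their endpoints (with endpoint densities converging, by the local limit step below, to the strictly positive heat-kernel density on $\mathbb S^1$) yields convergence of the bridges to the corresponding Brownian bridge on the circle. One should cite the standard functional-limit-theorem-for-bridges machinery (again Kallenberg-type results) rather than re-derive it.

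For the partition-function ratio I would establish a local central limit theorem for $K_{\mathcal{B}_n}^{(n)}$. Writing the one-step increment of the angle unwrapped on $\mathbb R$, each step has, under $K_{\mathcal{B}_n}$ with $\mathcal{B}_n\approx bn$, mean zero and variance $\mathcal{B}_n^{-1}(1+o(1))=(bn)^{-1}(1+o(1))$ by the Laplace-method expansion already displayed in the proof of Proposition \ref{1dXY} (the same expansion of $\int e^{\mathcal{B}_n\cos\phi}d\phi$ and of its tilted version gives the first two moments of the step distribution). Hence the unwrapped sum of $n$ steps is asymptotically $\mathcal N(0,1/b)$, and projecting mod $2\pi$ gives
\[
K_{\mathcal{B}_n}^{(n)}(\theta_0,\theta_1)\ \Big(\int_{-\pi}^{\pi}e^{\mathcal{B}_n\cos\phi}\,d\phi\Big)^{\!n}\Big/ C_n\ \longrightarrow\ \sum_{m\in\mathbb Z}\frac{1}{\sqrt{2\pi/b}}\exp\!\Big(-\tfrac{b}{2}(\theta_1-\theta_0+2\pi m)^2\Big),
\]
for a $(\theta_0,\theta_1)$-independent normalization $C_n$; here one uses Remark \ref{heat}, which identifies this theta-series with the time-$1$ heat kernel on the circle for the generator $\tfrac{1}{2b}\,d^2/d\theta^2$ (equivalently $V_e$ of \eqref{Ve} with $J_e=b$). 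Taking the quotient of this asymptotic for $(\theta_0,\theta_1)$ and for $(\theta_0',\theta_1')$, the $n$-dependent prefactors $C_n$ and $\big(\int e^{\mathcal{B}_n\cos\phi}d\phi\big)^n$ cancel exactly, leaving the claimed limit.

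The main obstacle is making the local limit theorem uniform enough: one needs the convergence of $K_{\mathcal{B}_n}^{(n)}(\theta_0,\theta_1)$ to the circle heat kernel to hold \emph{at fixed arguments} (a genuine local CLT, not merely the integral/weak statement), and one must control the Edgeworth-type corrections coming from the higher moments of the single-step law, whose magnitude is governed by the same $O(\mathcal{B}_n^{-1})=O(1/n)$ bookkeeping that appears in the Laplace expansions in Proposition \ref{1dXY}. A clean way to do this is to pass through characteristic functions on the circle: $K_{\mathcal{B}_n}^{(n)}$ has Fourier coefficients $\big(\widehat K_{\mathcal{B}_n}(k)\big)^n$, where $\widehat K_{\mathcal{B}_n}(k)=I_k(\mathcal{B}_n)/I_0(\mathcal{B}_n)$ is a ratio of modified Bessel functions; the uniform asymptotics $I_k(\mathcal{B}_n)/I_0(\mathcal{B}_n)=\exp(-k^2/(2\mathcal{B}_n)+O(k^4/\mathcal{B}_n^2+\dots))$ for $k$ in the relevant range then give $\big(\widehat K_{\mathcal{B}_n}(k)\big)^n\to e^{-k^2/(2b)}$ termwise with a summable dominating bound, and Fourier inversion on $\mathbb S^1$ produces the theta-series limit with the needed uniformity. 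This Bessel-ratio route also makes the cancellation of $n$-dependent constants in the ratio entirely transparent. The process-level statement is then routine given Proposition \ref{1dXY}; the only real work is this local limit estimate.
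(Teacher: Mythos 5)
Your proposal is correct and is, in substance, the route the paper intends but does not write out: the paper's entire ``proof'' of Proposition \ref{1dXY'} is the sentence preceding it, asserting that it ``follows from essentially the same arguments as for Proposition \ref{1dXY}'' together with Remark \ref{heat} for the partition-function ratio. What you add is the concrete content that makes this work. The identity $Z_{G_n}(\theta_0,\theta_1)=\bigl(\int_{-\pi}^{\pi}e^{\mathcal{B}_n\cos\phi}\,d\phi\bigr)^{n}K_{\mathcal{B}_n}^{(n)}(\theta_0,\theta_1)$ correctly reduces both assertions to a single object, and you rightly identify that the ratio statement is \emph{not} a consequence of the weak (generator-level) convergence used in Proposition \ref{1dXY} but requires a genuine local limit theorem for the $n$-step kernel. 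Your Fourier route via $\widehat K_{\mathcal{B}_n}(k)=I_k(\mathcal{B}_n)/I_0(\mathcal{B}_n)$, with $\bigl(\widehat K_{\mathcal{B}_n}(k)\bigr)^n\to e^{-k^2/(2b)}$ and a summable dominating bound (available since $I_k(x)/I_0(x)$ is decreasing in $k$ and admits Gaussian-type upper bounds), is cleaner and more rigorous than re-running the Laplace expansions, and it makes the cancellation of the $(\theta_0,\theta_1)$-independent prefactors in the ratio transparent. The reduction of the bridge convergence to the unconditioned convergence plus the endpoint local CLT is also the standard and correct way to obtain the first assertion.

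One discrepancy worth flagging: your local CLT yields the wrapped Gaussian $\sum_m\exp\bigl(-\tfrac{b}{2}(\theta_1-\theta_0+2\pi m)^2\bigr)$ (variance $1/b$ after $n$ steps of variance $\approx 1/(bn)$), which is consistent with the process-level claim $S(t)=\exp(iB(t)/\sqrt{b})$ and with Remark \ref{heat} at $t=1$, whereas the displayed limit in Proposition \ref{1dXY'} has $\exp\bigl(-\tfrac{1}{2b}(\cdot)^2\bigr)$. This appears to be a $b\leftrightarrow 1/b$ slip in the paper, compensated in Section \ref{Sec:pf} by the choice $\mathcal{B}_e=nJ_e^{-1}$ (rather than $nJ_e$) so that $V_e$ is still recovered; your version is the internally consistent one for the stated normalization $\mathcal{B}_n/n\to b$, so this is not a defect of your argument.
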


\section{Proof of Theorem \protect\ref{V}\label{Sec:pf}}

We now apply Theorem \ref{weak}, Proposition \ref{1dXY} and Proposition \ref%
{1dXY'} to prove Theorem \ref{V}. We first prove Theorem \ref{V} when $G$ is
a single edge, i.e., $G=\left( \left\{ x,y\right\} ,\left\{ e\right\}
\right) $, and without loss of generality we can identify $x$ and $y$ with $%
0 $ and $1$ in the unit interval $\left[ 0,1\right] $. Consider an XY\ model
on $G_{n}\mathbb{\ }$of Section \ref{Sec:V}, defined by the Gibbs measure (%
\ref{xy}) with $\mathcal{B}=\mathcal{B}_{n}=nJ_{e}^{-1}$. By Proposition \ref%
{1dXY}, $\left( S_{0}^{n},S_{n}^{n}\right) $ converges in distribution to $%
\left( \exp \left( i\Theta \left( 0\right) \right) ,\exp \left( i\Theta
\left( 1\right) \right) \right) $, where $\Theta \left( 0\right) $ is
uniform in $[-\pi ,\pi )$ and $\exp \left( i\left( \Theta \left( 1\right)
-\Theta \left( 0\right) \right) \right) $ is distributed as $\exp \left(
iB\left( 1\right) \sqrt{J_{e}}\right) $. In other words, $\left( \exp \left(
i\Theta \left( 0\right) \right) ,\exp \left( i\Theta \left( 1\right) \right)
\right) $ has the probability density%
\begin{equation*}
Z^{-1}V_{e}\left( \theta _{1}-\theta _{0}\right) d\theta _{0}d\theta _{1},%
\text{ \ }\theta _{0},\theta _{1}\in (-\pi ,\pi ]
\end{equation*}%
which has the same distribution as for the Villain model on $G$. By Theorem %
\ref{LYXY}, for all $\lambda _{0},\lambda _{1}\geq 0$, the distribution of 
\begin{equation*}
\lambda _{0}\cos \theta _{0}^{n}+\lambda _{1}\cos \theta _{n}^{n}
\end{equation*}%
satisfies the Lee-Yang property (Property (3) of Definition \ref{LYp}). It
also satisfies Properties (1) and (2). Applying Theorem \ref{weak}, we
conclude that%
\begin{equation*}
\lambda _{0}\cos \Theta \left( 0\right) +\lambda _{1}\cos \Theta \left(
1\right)
\end{equation*}%
is also of Lee-Yang type. This proves Theorem \ref{V} in the special case
when $G$ is a single edge.

In the general case (see Figure $1$ for an illustration), given $G=\left( 
\mathcal{V}\left( G\right) ,\mathcal{E}\left( G\right) \right) $ we first
replace each vertex $v\in $ $\mathcal{V}\left( G\right) $ by several new
vertices --- one denoted $v^{\ast }$ and then one more denoted $\left(
v,e,0\right) $ for each $e\in \mathcal{E}\left( G\right) $ incident on $v,$
which we write as $e\sim v$. (We will also denote $\left( v,e,0\right) $ for 
$e=\left\{ v,w\right\} $ by $\left( w,e,n\right) $.) We then create one new
edge between $\left\{ v^{\ast },\left( v,e,0\right) \right\} _{v\in \mathcal{%
V}}$ and each $\left( v,e,0\right) $. Each $e=\left\{ v,w\right\} \in 
\mathcal{E}\left( G\right) $ is replaced by a collection $\mathcal{V}\left(
e,n\right) $ of $n-1$ new vertices which will be labelled $\left(
v,e,1\right) ,\left( v,e,2\right) ,...,\left( v,e,n-1\right) $ (and also in
opposite order $\left( w,e,1\right) ,\left( w,e,2\right) ,...,\left(
w,e,n-1\right) $) and a collection $\mathcal{E}\left( e,n\right) $ of $n$
new edges: $\left\{ \left( v,e,0\right) ,\left( v,e,1\right) \right\}
,\left\{ \left( v,e,1\right) ,\left( v,e,2\right) \right\} ,...$ (or in
opposite order $\left\{ \left( w,e,0\right) ,\left( w,e,1\right) \right\}
,...$). This defines a new graph $G_{n}^{\ast }=\left( \mathcal{V}\left(
G_{n}^{\ast }\right) ,\mathcal{E}\left( G_{n}^{\ast }\right) \right) $.

\begin{figure}[htbp]
\centering
\includegraphics[width=.8\linewidth]{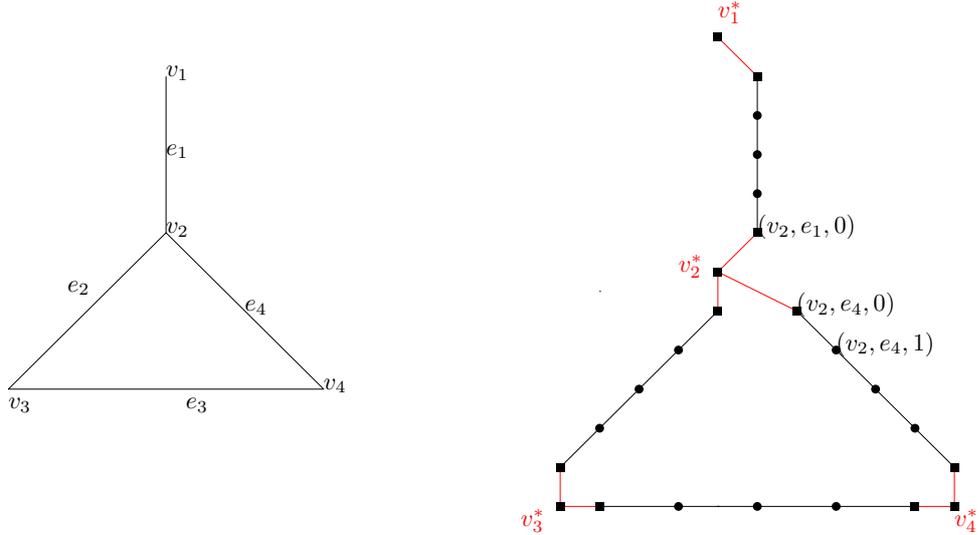}
\caption{An example of a graph $G$ (on the left) with $4$ vertices and $4$
edges and the modified graph $G_{n}^{\ast }$ with $n=4$. Square dots in $%
G_{n}^{\ast }$ indicate the replacements for the vertices of $G$; some of
these are labelled as discussed in the text. Circular dots indicates the $%
n-1=3$ new vertices replacing each edge of $G$.}
\label{fig:1}
\end{figure}

As indicated by Figure $1$, to prove Theorem \ref{V}, it suffices to prove
that for all non-negative $\left\{ \lambda _{v}\right\} _{v\in \mathcal{V}%
\left( G\right) }$%
\begin{eqnarray}
&&Z_{G}\mathbb{E}\left[ \exp \left( z\sum_{v\in \mathcal{V}\left( G\right)
}\lambda _{v}\cos \Theta _{v}\right) \right]   \notag \\
&=&\int \exp \left( z\sum_{v\in \mathcal{V}\left( G\right) }\lambda _{v}\cos
\theta _{v}\right) \prod_{e=\left( u,v\right) \in \mathcal{E}\left( G\right)
}V_{e}\left( \theta _{u}-\theta _{v}\right) \prod_{v\in \mathcal{V}\left(
G\right) }d\theta _{v}  \label{PIZV}
\end{eqnarray}%
has only pure imaginary zeroes.

For any fixed $J>0$ and $\left\{ J_{e}\right\} _{e\in \mathcal{E}\left(
G\right) }$, let $\mathcal{B}_{e}=nJ_{e}^{-1}$, and consider an
(inhomogeneous) XY model on $G_{n}^{\ast }$, defined by the Gibbs measure 
\begin{equation*}
Z_{G_{n}^{\ast }}^{-1}\exp \left( -H_{XY}^{J}\left( \theta ^{n}\right)
\right) \prod_{i\in \mathcal{V}\left( G_{n}^{\ast }\right) }d\theta _{i}^{n},%
\text{ \ }\theta _{i}^{n}\in (-\pi ,\pi ]\text{ }\forall i,
\end{equation*}%
with Hamiltonian 
\begin{eqnarray*}
H_{XY}^{J}\left( \theta ^{n}\right) &:&=\sum_{e\in \mathcal{E}\left(
G\right) }H_{XY}^{e,n}\left( \theta ^{n}\right) -J\sum_{v\in \mathcal{V}%
\left( G\right) }\sum_{e\sim v}\cos \left( \theta _{v^{\ast }}^{n}-\theta
_{\left( v,e,0\right) }^{n}\right) \\
&:&=-\sum_{e\in \mathcal{E}\left( G\right) }\sum_{\left( i,j\right) \in 
\mathcal{E}\left( e,n\right) }\mathcal{B}_{e}\cos \left( \theta
_{i}^{n}-\theta _{j}^{n}\right) -J\sum_{v\in \mathcal{V}\left( G\right)
}\sum_{e\sim v}\cos \left( \theta _{v^{\ast }}^{n}-\theta _{\left(
v,e,0\right) }^{n}\right) .
\end{eqnarray*}%
Applying the Lee-Yang theorem for the XY\ model (Theorem \ref{LYXY}), we see
that for all non-negative $\left\{ \lambda _{v^{\ast }}\right\} _{v\in 
\mathcal{V}\left( G\right) }$, 
\begin{equation*}
\int \exp \left( -H_{XY}^{J}\left( \theta ^{n}\right) +z\sum_{v\in \mathcal{V%
}\left( G\right) }\lambda _{v^{\ast }}\cos \theta _{v^{\ast }}^{n}\right)
\prod_{v\in \mathcal{V}\left( G_{n}^{\ast }\right) }d\theta _{v}^{n}
\end{equation*}%
has only pure imaginary zeros. Given $e=\left( u,v\right) \in \mathcal{E}%
\left( G\right) $, and any $\theta _{u},\theta _{v}\in (-\pi ,\pi ]$, we
also define 
\begin{equation*}
Z_{XY}^{e,n}\left( \theta _{u},\theta _{v}\right) =\int \exp \left(
-H_{XY}^{e,n}\left( \theta ^{n}\right) \right) \prod_{j\in \mathcal{V}\left(
e,n\right) }d\theta _{j}^{n}\delta \left( \theta _{u,e,0}^{n}=\theta
_{u}\right) \delta \left( \theta _{v,e,0}^{n}=\theta _{v}\right) .
\end{equation*}%
Using the definition of $H_{XY}^{J}$, we see that 
\begin{eqnarray}
&&\int \exp \left( -\sum_{e\in \mathcal{E}\left( G\right)
}H_{XY}^{e,n}\left( \theta ^{n}\right) +J\sum_{v\in \mathcal{V}\left(
G\right) }\sum_{e\sim v}\cos \left( \theta _{v^{\ast }}^{n}-\theta _{\left(
v,e,0\right) }^{n}\right) +z\sum_{v\in \mathcal{V}\left( G\right) }\lambda
_{v^{\ast }}\cos \theta _{v^{\ast }}^{n}\right)  \notag \\
&&\times \prod_{v\in \mathcal{V}\left( G_{n}^{\ast }\right) }d\theta _{v}^{n}%
\text{ }\times \left( \prod_{e\in \mathcal{E}\left( G\right)
}Z_{XY}^{e,n}\left( 0,0\right) \right) ^{-1}  \label{xunorm}
\end{eqnarray}%
has only pure imaginary zeros.

Applying Proposition \ref{1dXY'}, by taking $n\rightarrow \infty $ we have%
\begin{eqnarray*}
\frac{Z_{XY}^{e,n}\left( \theta _{u},\theta _{v}\right) }{Z_{XY}^{e,n}\left(
0,0\right) } &\rightarrow &\text{ Constant }\cdot \sum_{m\in \mathbb{Z}}\exp
\left( -\frac{J_{e}}{2}\left( \theta _{u}-\theta _{v}+2\pi m\right)
^{2}\right) \\
&=&\text{ Constant }\cdot V_{e}\left( \theta _{u}-\theta _{v}\right) .
\end{eqnarray*}%
Therefore, omitting all the superscripts $n$ in $\theta _{\cdot }^{n}$, the
integral (\ref{xunorm}) converges to%
\begin{eqnarray}
&&\text{Constant }\cdot \int \exp \left( J\sum_{v\in \mathcal{V}\left(
G\right) }\sum_{e\sim v}\cos \left( \theta _{v^{\ast }}-\theta _{\left(
v,e,0\right) }\right) +z\sum_{v\in \mathcal{V}\left( G\right) }\lambda
_{v^{\ast }}\cos \theta _{v^{\ast }}\right)  \label{vil} \\
&&\times \prod_{e=\left( u,v\right) \in \mathcal{E}\left( G\right)
}V_{e}\left( \theta _{u}-\theta _{v}\right) \prod_{v\in \mathcal{V}\left(
G\right) }d\theta _{v^{\ast }}.  \notag
\end{eqnarray}%
By Theorem \ref{weak}, we see that (\ref{vil}), as a function of $z$, only
has pure imaginary zeroes. Applying Laplace's method, we can multiply (\ref%
{vil}) by the right $J-$dependent factor and let $J\rightarrow \infty $ to
recover the Villain model on $G$, finishing the proof of (\ref{PIZV}).

\section{Complex Gaussian Multiplicative Chaos\label{Sec:GMC}}

\subsection{Continuum Complex Gaussian multiplicative chaos\label{GMC}}

In this subsection we apply Theorem \ref{slowtail2} to complex Gaussian
multiplicative chaos. Let $D_{r}\subset \mathbb{C}$, $r>0$ be the disk of
radius $r$ centered at the origin. For $\beta >0$, complex Gaussian
multiplicative chaos in $D_{r}$ is defined as $\exp \left( i\beta
h^{r}\right) $, where $h^{r}$ is a Dirichlet zero boundary condition GFF in $%
D_{r}$. As was discussed in \cite{LRV}, for $\beta \in \left( 0,\sqrt{2}%
\right) $ the (complex-valued) measure $e^{i\beta h^{r}\left( x\right) }dx$
is well-defined and is absolutely continuous with respect to Lebesgue
measure.

We will obtain results for complex Gaussian multiplicative chaos in the
whole plane, which is defined informally as $\exp \left( i\beta h\right) $,
where $h$ is a GFF\ in all of $\mathbb{R}^{2}$. Mathematically, the GFF\ in $%
\mathbb{R}^{2}$ is not well-defined as a random generalized function --- it
is only defined up to an additive constant. Therefore, a priori, $\exp
\left( i\beta h\right) $ is only defined up to a multiplicative constant (on
the unit circle in $\mathbb{C}$). In this section, we obtain the measure $%
\exp \left( i\beta h\right) dx$ by defining for any bounded simply connected
domain $U\subset \mathbb{R}^{2}$, the random variable 
\begin{equation*}
\int_{U}\exp \left( i\beta h\left( x\right) \right) dx
\end{equation*}%
as the limit in distribution of 
\begin{equation*}
\int_{U}\exp \left( i\beta h^{r}\left( x\right) \right) dx
\end{equation*}%
as $r\rightarrow \infty $. The existence of the limit of the modulus of $%
\int_{U}\exp \left( i\beta h^{r}\left( x\right) \right) dx$ as $r\rightarrow
\infty $ is proved in the appendix of \cite{LSZW} (and an alternative proof
may be obtained by using a construction similar to that in \cite{LRV}). To
give a precise definition of the limiting field without analyzing
multiplicative factors, we may simply proceed as follows. Let $\Phi $ be a
random variable uniformly distributed on $(-\pi ,\pi ]$ that is independent
of the entire collection of random fields $\left\{ h_{r}:r>0\right\} $. Then
define $\int_{U}\exp \left( i\beta h\left( x\right) \right) dx$ as the $%
r\rightarrow \infty $ limit in distribution of 
\begin{equation*}
e^{i\Phi }\int_{U}\exp \left( i\beta h^{r}\left( x\right) \right) dx.
\end{equation*}

Based on the spin-wave picture of the XY and Villain models, it a priori
seems reasonable to conjecture that complex Gaussian multiplicative chaos
with any of the boundary conditions mentioned in\ Remark \ref{bc} also
satisfies a Lee-Yang property. Here we restrict attention first to the
whole-plane field (without boundary). Let $U\subset \mathbb{R}^{2}$ be a
bounded simply connected domain with smooth boundary, and for any
non-negative bounded continuous function $\lambda :D_{r}\rightarrow \mathbb{R%
}$, define the random variable%
\begin{equation}
X_{\infty }=\int_{U}\lambda \left( x\right) \Re{e^{i\beta h\left( x\right) }}%
dx.  \label{Xinf}
\end{equation}%
If the Lee-Yang property were valid, then $\mathbb{E}\left[ \exp \left(
zX_{\infty }\right) \right] $ as a function of $z$\ would only have pure
imaginary zeroes. The next theorem \textit{disproves }the Lee-Yang property
for complex Gaussian multiplicative chaos, when $\beta \in \left( 1,\sqrt{2}%
\right) $.

\begin{proposition}
\label{wpGMC} Let $X_{\infty }$ be as in (\ref{Xinf}) with $\lambda \left(
x\right) \equiv 1$ for any bounded simply connected domain $U$ with smooth
boundary. Then for any $\beta \in \left( 1,\sqrt{2}\right) $, $\mathbb{E}%
\left[ \exp \left( zX_{\infty }\right) \right] $ has some zeroes that are
not purely imaginary.
\end{proposition}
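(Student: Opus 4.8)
The plan is to invoke Theorem~\ref{slowtail2} with $X=X_\infty$, so it suffices to verify its two hypotheses for $X_\infty$. First I would use the whole-plane construction recalled above to write, in distribution, $X_\infty\stackrel{d}{=}\Re(e^{i\Phi}M)=R\cos\Psi$, where $M:=\lim_{r\to\infty}\int_U e^{i\beta h^r(x)}\,dx$ (the limit exists in distribution, its modulus converging in $L^2$ by the appendix of \cite{LSZW}), $R:=|M|$, and $\Psi$ is uniform on $(-\pi,\pi]$ and independent of $R$, obtained by absorbing the independent uniform phase $\Phi$. This already records hypothesis~(1) of Definition~\ref{LYp}, the symmetry of $X_\infty$. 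Since $|X_\infty|\le R$ we have $\mathbb P(|X_\infty|>t)\le\mathbb P(R>t)$; and on $\{R>2t\}$ the event $\{\cos\Psi>\tfrac12\}$ forces $R\cos\Psi>t$, so $\mathbb P(X_\infty>t)\ge\tfrac13\mathbb P(R>2t)$ using $\mathbb P(\cos\Psi>\tfrac12)=\tfrac13$ and independence; also $\mathbb E[|X_\infty|^{2k}]=\mathbb E[R^{2k}]\binom{2k}{k}4^{-k}$. Thus everything reduces to a two-sided control of the tail of $R$, an upper bound giving hypothesis~(1) of Theorem~\ref{slowtail2} and a lower bound giving hypothesis~(2).

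For this I would work with the moment identity (valid for $\beta^2<2$, up to a $\beta$-dependent normalization factor raised to the power $k$)
\[
\mathbb E[R^{2k}] \;=\; \int_{U^{2k}} \frac{\prod_{1\le i<i'\le k}|x_i-x_{i'}|^{\beta^2}\;\prod_{1\le j<j'\le k}|y_j-y_{j'}|^{\beta^2}}{\prod_{i,j=1}^k|x_i-y_j|^{\beta^2}}\;\prod_i dx_i\prod_j dy_j ,
\]
the $2k$-point sine-Gordon/log-gas integral; each is finite because $\beta^2<2$ makes the only singularity (an $x_i$ meeting a $y_j$) integrable in $\mathbb R^2$, while coincidences among the $x$'s or among the $y$'s produce zeros. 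The crucial quantitative input is the two-sided estimate $\mathbb E[R^{2k}]\asymp(Ck)^{2k/\beta^2}$, equivalently $\mathbb P(R>t)\asymp\exp(-ct^{\beta^2})$ — a stretched exponential with exponent $\beta^2\in(1,2)$. For the upper bound I would organise the integral by dyadic scales, bounding the contribution of each set of coinciding indices: a tight cluster of $p$ of the $x$'s and $p$ of the $y$'s at scale $\delta$ carries a weight whose integral over $\delta$ converges, and summing over all ways of partitioning $\{1,\dots,k\}$ into such clusters produces $(Ck)^{2k/\beta^2}$ (alternatively one may cite known $L^p$-bounds for imaginary multiplicative chaos in the $L^2$-phase, cf.\ \cite{LRV}); this multiscale analysis, and the final exponent, is where the condition $\beta>1$ enters. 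For the matching lower bound I would restrict the integral to configurations in which, for each $j$, the point $y_j$ lies within a small fixed fraction of $\min_{i\neq j}|x_i-x_j|$ of $x_j$; on that set the $x$-$x$ factors cancel against the off-diagonal $x$-$y$ factors, leaving only $\prod_j|x_j-y_j|^{-\beta^2}$, and carrying out the $y$-integrals and then the nearest-neighbour spacing integral for $k$ points in $U$ yields the lower bound of order $(ck)^{2k/\beta^2}$.

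Given this, the conclusion is quick. The upper bound gives $\mathbb E[R^m]\le(Cm)^{m/\beta^2}$ for all $m\ge1$ (interpolate the even moments by Lyapunov); since $\beta>1$ we may fix $a\in(1,\beta^2)$, and then $\mathbb E[e^{bR^a}]=\sum_m\frac{b^m}{m!}\mathbb E[R^{am}]<\infty$ for every $b>0$ because $a/\beta^2<1$; as $|X_\infty|\le R$ this is hypothesis~(1) of Theorem~\ref{slowtail2}. The lower bound, together with $\beta<\sqrt2$ (so $2/\beta^2>1$), gives $(\mathbb E[R^{2k}]/k!)^{1/k}\to\infty$, whence $\mathbb E[e^{b'R^2}]=\infty$ and therefore, using $\mathbb E[|X_\infty|^{2k}]=\mathbb E[R^{2k}]\binom{2k}{k}4^{-k}$ with $\binom{2k}{k}4^{-k}\asymp k^{-1/2}$, also $\mathbb E[e^{b'X_\infty^2}]=\infty$ for every $b'>0$: this is hypothesis~(2). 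Theorem~\ref{slowtail2} then yields that $\mathbb E[e^{zX_\infty}]$ has zeroes off the imaginary axis, as claimed in Proposition~\ref{wpGMC}.

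The main obstacle is the two-sided moment bound, and above all the precise exponent $2/\beta^2$: the upper bound needs a careful multiscale organisation of the singular $2k$-fold integral with control of the cluster combinatorics, and the lower bound needs the right near-diagonal test configurations and the exact cancellation of the $x$-$x$ and off-diagonal $x$-$y$ factors. By contrast the reduction to $R\cos\Psi$, the transfer of tail bounds to $X_\infty$, and the passage from moments to exponential integrability are all routine.
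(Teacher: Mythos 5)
Your overall architecture matches the paper's: verify the two hypotheses of Theorem~\ref{slowtail2} for $X_\infty$ by a two-sided control of the tail of $R=\left\vert W_U\right\vert$, using $\left\vert X_\infty\right\vert\le R$ for the upper bound and the uniform-phase (rotational-invariance) structure to push a lower bound on the tail of $R$ down to $X_\infty$. Those reduction steps are fine; the paper does the lower-bound transfer slightly differently, via the pointwise inequality $\tfrac12\left(e^{b'X_\infty^2}+e^{b'Y_\infty^2}\right)\ge e^{b'\left\vert W_U\right\vert^2/2}$ with $Y_\infty=\Im{W_U}$ equidistributed with $X_\infty$, but your $R\cos\Psi$ version is equivalent. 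The genuine gap is in the core quantitative input. The correct asymptotics (Proposition~\ref{tailwp}, quoted from the Coulomb-gas computation of \cite{GP}; see also \cite{LSZW}) are $\log\mathbb{E}\left[R^{2k}\right]=\beta^2 k\log k+O(k)$, i.e.\ $\mathbb{E}\left[R^{2k}\right]=k^{\beta^2 k}e^{O(k)}$, equivalently $\mathbb{P}\left(R>t\right)=\exp\left(-c^{\ast}t^{2/\beta^2}+o(t^{2/\beta^2})\right)$. You assert the reciprocal exponents, $\mathbb{E}\left[R^{2k}\right]\asymp(Ck)^{2k/\beta^2}$ and $\mathbb{P}(R>t)\asymp\exp(-ct^{\beta^2})$; these are not the same (they agree only at $\beta^2=\sqrt2$), and your final deduction only appears to close because of the coincidence that $\beta^2$ and $2/\beta^2$ both lie in $(1,2)$ exactly when $\beta\in(1,\sqrt2)$. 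Relatedly, you attribute hypothesis (1) of Theorem~\ref{slowtail2} to the condition $\beta>1$ and hypothesis (2) to $\beta<\sqrt2$; with the correct exponents the roles are exactly reversed.

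More seriously, the lower-bound mechanism you propose does not produce enough mass. Restricting to configurations where each $y_j$ lies within $\epsilon\,d_j$ of $x_j$ (with $d_j$ the nearest-neighbour spacing) and cancelling the $x$--$x$ factors against the off-diagonal $x$--$y$ factors does leave $\prod_j\left\vert x_j-y_j\right\vert^{-\beta^2}$, whose $y_j$-integral is of order $(\epsilon d_j)^{2-\beta^2}$; but $k$ points in the fixed domain $U$ have typical spacing $d_j\asymp k^{-1/2}$, so summing over the $k!$ pairings and the $k!$ cell assignments yields only $\mathbb{E}\left[R^{2k}\right]\gtrsim\left(ck^{\beta^2/2}\right)^{k}$. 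This undershoots the truth $k^{\beta^2 k}$ by a factor $k^{\beta^2 k/2}$ and, crucially, is not enough to verify hypothesis (2): one needs $\left(\mathbb{E}\left[R^{2k}\right]/k!\right)^{1/k}\to\infty$, and $\left(ck^{\beta^2/2}\right)^{k}/k!$ has $k$-th root of order $k^{\beta^2/2-1}\to0$ for $\beta^2<2$, whereas the true growth gives $k^{\beta^2-1}\to\infty$ precisely when $\beta>1$. So dipole configurations alone cannot establish $\mathbb{E}\left[e^{b'R^2}\right]=\infty$; the divergence genuinely requires the sharp two-component-plasma partition function asymptotics, which is the nontrivial input the paper imports as Proposition~\ref{tailwp} and which your multiscale sketch does not recover.
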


\bigskip To prove Proposition \ref{wpGMC}, we start with the following tail
estimate for 
\begin{equation*}
W_{U}:=\int_{U}e^{i\beta h\left( x\right) }dx.
\end{equation*}

\begin{proposition}
\label{tailwp}For $\beta \in (0,\sqrt{2})$ and $k\in \mathbb{N}$, 
\begin{equation}
\log \mathbb{E}\left\vert W_{U}\right\vert ^{2k}=\beta ^{2}k\log k+c\left(
\beta ,U\right) k+o\left( k\right) \text{ \ as }k\rightarrow \infty \text{.}
\label{mom}
\end{equation}%
Therefore, for $t$ sufficiently large 
\begin{equation}
\mathbb{P}\left( \left\vert W_{U}\right\vert >t\right) =\exp \left( -c^{\ast
}\left( \beta ,U\right) t^{\frac{2}{\beta ^{2}}}+o(t^{\frac{2}{\beta ^{2}}%
})\right) .  \label{tailwp1}
\end{equation}%
In (\ref{mom}) and (\ref{tailwp1}) $c\left( \beta ,U\right) $ and $c^{\ast
}\left( \beta ,U\right) $ in $\left( 0,\infty \right) $ depend on $\beta $
and $U$ but not on $k$.
\end{proposition}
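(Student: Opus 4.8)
The plan is to establish the moment asymptotics \eqref{mom} first, and then deduce the tail bound \eqref{tailwp1} from it by a standard Legendre-transform argument. The starting point for \eqref{mom} is the classical formula expressing the even moments of $W_U$ as a Gaussian multiplicative chaos integral: since $h^r$ is a Gaussian field with covariance $G_r$ (the Dirichlet Green's function on $D_r$, which converges as $r\to\infty$ to $-\log|x-y|$ up to bounded terms on $U$), one has
\[
\mathbb{E}\left\vert W_U\right\vert^{2k}
= \int_{U^{2k}} \prod_{1\le i<j\le k}\prod \cdots \, dx\, dy,
\]
where the integrand, after taking the $r\to\infty$ limit, is
\[
\prod_{i<j}|x_i-x_j|^{\beta^2}\,\prod_{i<j}|y_i-y_j|^{\beta^2}\,\prod_{i,j}|x_i-y_j|^{-\beta^2}
\]
times a bounded, bounded-away-from-zero factor coming from the harmonic correction to the Green's function on $U$. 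This is exactly the Dotsenko–Fateev / Selberg-type integral that controls the $L^{2k}$ norm; the attractive cross-term $|x_i-y_j|^{-\beta^2}$ is integrable precisely when $\beta^2<2$, which is why we need $\beta\in(0,\sqrt2)$, and why all these moments are finite.

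**The main work** is the large-$k$ asymptotics of $\log$ of that integral. First I would reduce to the pure power $-\log$ kernel by absorbing the bounded harmonic correction into the $c(\beta,U)k+o(k)$ term (using that $|U|<\infty$ and the correction is uniformly bounded on $U$, it contributes a factor between $e^{-Ck}$ and $e^{Ck}$ over the $2k$ variables, i.e.\ $O(k)$ in the log). The leading term $\beta^2 k\log k$ is then read off from the scaling of the Selberg integral: substituting $x_i = \sqrt k\, u_i$ (heuristically — the dominant configurations spread out on scale $\sqrt k$) converts the product of $\binom{k}{2}+\binom{k}{2}-k^2 = -\Theta(k^2)$ pairwise kernels into a factor $(\sqrt k)^{\beta^2(k^2/2+k^2/2-k^2)}\cdot(\sqrt k)^{O(k)}\cdot k^{O(k)}$ from the Jacobian — the $k^2$ terms cancel, leaving $(\sqrt k)^{-\beta^2 k}\cdot k^{2k}$, whose log is $-\tfrac{\beta^2}{2}k\log k + 2k\log k + \cdots$. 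More carefully, one uses the exact Dotsenko–Fateev evaluation or Stirling on the product of Gamma functions appearing in the Selberg formula; either way, after the $\Gamma$-function bookkeeping the $k^2\log k$ and $k^2$ terms cancel and what survives is $\beta^2 k\log k + c(\beta,U)k + o(k)$. I expect this Gamma-function asymptotic bookkeeping to be the main obstacle: one must track which terms are genuinely $O(k\log k)$ versus $O(k)$ versus $o(k)$, and handle the fact that for $\beta^2$ close to $2$ the Selberg integral develops near-singular behaviour that could a priori spoil the clean $o(k)$ remainder. A complementary (and perhaps cleaner) route is to bound the integral above and below by hand: for the upper bound, split $U^{2k}$ according to the nearest-neighbour distances and use that $\int_{U}|x-y|^{-\beta^2}\,dy \le C$ uniformly, chaining to get $\mathbb{E}|W_U|^{2k}\le C^k\,(k!)^{\beta^2/2}\cdot(\text{something})$; Stirling then gives the upper half of \eqref{mom}. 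The lower bound comes from restricting to configurations where the $x_i$'s and $y_j$'s each form a well-separated set of points at mutual distances $\asymp k^{-1/2}$, which is the regime realizing the claimed exponent.

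**Finally**, given \eqref{mom}, the tail \eqref{tailwp1} follows by optimizing Markov's inequality: $\mathbb{P}(|W_U|>t)\le t^{-2k}\,\mathbb{E}|W_U|^{2k} = \exp(-2k\log t + \beta^2 k\log k + c\,k + o(k))$. Choosing $k = k(t)$ to minimize $-2k\log t + \beta^2 k\log k$, i.e.\ $\log k \sim \tfrac{2}{\beta^2}\log t$, so $k\asymp t^{2/\beta^2}$, gives the exponent $-c^*(\beta,U)\,t^{2/\beta^2} + o(t^{2/\beta^2})$ with $c^* = \tfrac{\beta^2}{2}e^{-1-c/\beta^2}$ or similar; the matching lower bound on the tail comes from the Paley–Zygmund inequality applied to $|W_U|^{2k}$ together with the same moment asymptotics (the ratio $\mathbb{E}|W_U|^{4k}/(\mathbb{E}|W_U|^{2k})^2$ is controlled by \eqref{mom} with $2k$ in place of $k$, which only costs an $\exp(o(k))$ factor after taking logs). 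This two-sided optimization is routine once \eqref{mom} is in hand. With Proposition \ref{tailwp} established, Proposition \ref{wpGMC} follows quickly: since $\beta>1$ we have $2/\beta^2 < 2$, so $|W_U|$ — and hence its real part $X_\infty = \int_U\Re e^{i\beta h}\,dx$, which has the same order of tail up to constants, and in fact one checks $\mathbb{E}e^{b'X_\infty^2}=\infty$ for every $b'>0$ while $\mathbb{E}e^{b|X_\infty|^a}<\infty$ for $a = 2/\beta^2\in(1,2)$ and some $b>0$ — satisfies the hypotheses of Theorem \ref{slowtail2}, which then forces $\mathbb{E}[e^{zX_\infty}]$ to have a non-purely-imaginary zero.
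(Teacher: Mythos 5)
Your overall architecture matches the paper's: the Coulomb-gas representation of $\mathbb{E}\left\vert W_{U}\right\vert ^{2k}$ is exactly the paper's starting point (Lemma A.1 of \cite{LSZW}), and the Chebyshev/Legendre optimization passing from (\ref{mom}) to (\ref{tailwp1}) is what the paper imports as Lemma A.2 of \cite{LSZW}. The difference is that the paper treats the core analytic fact --- the large-$k$ asymptotics (\ref{mom}) of the two-component Coulomb gas partition function --- as a citation to \cite{GP}, whereas you attempt to derive it, and that derivation has a genuine gap: your bookkeeping never actually produces the coefficient $\beta ^{2}$, and your several sub-arguments give mutually inconsistent answers. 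Concretely: (i) the net homogeneity degree of the kernel is $\beta ^{2}\bigl(2\binom{k}{2}-k^{2}\bigr)=-\beta ^{2}k$, not $-\Theta (k^{2})$ as you write; (ii) since $U$ is a fixed bounded domain the relevant scale is $k^{-1/2}$ (the $2k$ points pack rather than spread out), and your scaling substitution, taken at face value, leaves a leading term $(2-\tfrac{\beta ^{2}}{2})k\log k$ --- or $\tfrac{\beta ^{2}}{2}k\log k$ if one books the $(k!)^{2}$ relabelling factor with the standard extensivity ansatz --- neither of which equals $\beta ^{2}k\log k$; (iii) your chaining upper bound $C^{k}(k!)^{\beta ^{2}/2}=e^{\frac{\beta ^{2}}{2}k\log k+O(k)}$ is \emph{strictly smaller} than the right-hand side of (\ref{mom}) and so cannot be a correct upper bound of the stated order; (iv) your lower-bound configurations (points at mutual distance $\asymp k^{-1/2}$) also contribute only $e^{\frac{\beta ^{2}}{2}k\log k+O(k)}$, since for a neutral configuration rescaling by $a$ multiplies the kernel by $a^{-\beta ^{2}k}$ while the unit-scale Coulomb energy is extensive; and (v) there is no Selberg/Dotsenko--Fateev closed form for this integral over a general planar domain $U$, so the ``exact evaluation'' route is unavailable. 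The true coefficient $\beta ^{2}$ (equivalently, moments of size $((2k)!)^{\beta ^{2}/2}e^{O(k)}$) comes from a finer analysis of near-neutral charge clustering, which is precisely the content of \cite{GP}; your sketch does not identify this mechanism, so (\ref{mom}) is asserted rather than proved.

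On the second half: Markov's inequality optimized over $k$ does give the upper bound in (\ref{tailwp1}) once (\ref{mom}) is granted, but your Paley--Zygmund lower bound is quantitatively off. The ratio $\mathbb{E}\left\vert W_{U}\right\vert ^{4k}/\bigl(\mathbb{E}\left\vert W_{U}\right\vert ^{2k}\bigr)^{2}$ is $e^{2\beta ^{2}(\log 2)k+o(k)}$, i.e.\ $e^{\Theta (k)}$ rather than $e^{o(k)}$, because $\beta ^{2}(2k)\log (2k)-2\beta ^{2}k\log k=2\beta ^{2}(\log 2)k$ does not vanish. Since the optimal $k$ is of order $t^{2/\beta ^{2}}$ and the exponent being matched is itself $\Theta (k)$, this loses a multiplicative constant in the exponent and does not recover (\ref{tailwp1}) with the \emph{same} $c^{\ast }$ on both sides. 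It does suffice for the only consequence the paper uses downstream, namely $\mathbb{E}\exp (b^{\prime }\left\vert W_{U}\right\vert ^{2})=\infty $ for all $b^{\prime }>0$; note also that transferring the tail lower bound from $\left\vert W_{U}\right\vert $ to its real part $X_{\infty }$ is not automatic and requires the rotational-invariance argument of (\ref{slow}).
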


\begin{proof}[Proof of Proposition \protect\ref{tailwp}]
Apply Lemma A.1 of \cite{LSZW}, Appendix A (the same calculation works for
any domain $U\subset \mathbb{R}^{2}$), to obtain%
\begin{equation*}
\mathbb{E}\left\vert W_{U}\right\vert ^{2k}=\int_{U^{\otimes 2k}}\left( 
\frac{\prod_{1<i<j<k}\left\vert x_{i}-x_{j}\right\vert \left\vert
y_{i}-y_{j}\right\vert }{\prod_{i,j}\left\vert x_{i}-y_{j}\right\vert }%
\right) ^{\beta ^{2}}d\vec{x}d\vec{y}.
\end{equation*}%
This is the partition function for a Coulomb gas ensemble with $k$ positive
charges and $k$ negative charges confined in $U$. Eq (\ref{mom}) then
follows from the calculation of this partition function in \cite{GP} (see
also \cite{LSZW}). Eq. (\ref{tailwp1}) is a consequence of (\ref{mom}) and
Chebyshev's inequality (see Lemma A.2 of \cite{LSZW} for a proof when $%
U=D_{1}$).
\end{proof}

\begin{proof}[Proof of Proposition \protect\ref{wpGMC}]
We now apply Proposition \ref{tailwp} and Theorem \ref{slowtail2} to finish
the proof of Proposition \ref{wpGMC}. To apply Proposition \ref{tailwp}, we
want upper and lower bounds on the moment generating function and
distribution tails for 
\begin{equation*}
X_{\infty }=\Re{W_{U}}=\int_{U}\Re{e^{i\beta h\left( x\right) }}dx.
\end{equation*}%
and for powers of $\left\vert X_{\infty }\right\vert $ in terms of those for 
$\left\vert W_{U}\right\vert $.

The upper bounds are immediate since 
\begin{equation*}
X_{\infty }\leq \left\vert X_{\infty }\right\vert \leq \left\vert
W_{U}\right\vert ,
\end{equation*}%
so by Proposition \ref{tailwp}, 
\begin{equation*}
\mathbb{P}\left( \left\vert X_{\infty }\right\vert >t\right) \leq \mathbb{P}%
\left( \left\vert W_{U}\right\vert >t\right) \leq \exp \left( -0.9c^{\ast
}\left( \beta ,U\right) t^{2/\beta ^{2}}\right)
\end{equation*}%
for $t$ large, which implies, by an explicit computation that for $b>0$, 
\begin{equation*}
\mathbb{E}e^{b\left\vert X_{\infty }\right\vert ^{2/\beta ^{2}}}<\infty .
\end{equation*}%
Note that for $\beta \in \left( 1,\sqrt{2}\right) $, $2/\beta ^{2}\in \left(
1,2\right) $.

For a lower bound, we use the fact that $W_{U}$ is equidistributed with $%
e^{i\phi }W_{U}$ for any real $\phi $, which implies that $X_{\infty }= \Re{%
W_{U}}$ and $Y_{\infty }=\Im{W_{U}}$ are equidistributed. Thus 
\begin{equation}
\mathbb{E}e^{b^{\prime }X_{\infty }^{2}}=\mathbb{E}\left( e^{b^{\prime
}X_{\infty }^{2}}+e^{b^{\prime }Y_{\infty }^{2}}\right) /2\geq \mathbb{E}%
\left( e^{b^{\prime }X_{\infty }^{2}/2}e^{b^{\prime }Y_{\infty
}^{2}/2}\right) =\mathbb{E}\left( e^{b^{\prime }\left\vert W_{U}\right\vert
^{2}/2}\right) .  \label{slow}
\end{equation}%
Then by Proposition \ref{tailwp}, $\mathbb{E}\left( e^{b^{\prime }\left\vert
W_{U}\right\vert ^{2}/2}\right) =\infty $ for any $b^{\prime }>0$ and now by
Theorem \ref{slowtail2} we conclude that $\mathbb{E}\left[ \exp \left(
zX_{\infty }\right) \right] $ must have some zeroes that are not purely
imaginary.
\end{proof}

\subsection{ Discrete complex Gaussian multiplicative chaos}

Let $G=\left( \mathcal{V},\mathcal{E}\right) $ be a planar domain, i.e., $%
\mathcal{V}\subset \mathbb{Z}^{2}$ and $\mathcal{E}$ is the set of nearest
neighbor edges between vertices of $\mathcal{V}$. We use $\partial G$ to
denote the boundary vertices of $G$, i.e., those that are connected to $%
\mathbb{Z}^{2}\setminus \mathcal{V}$ by one or more edges. Consider a random
function $h\left( i\right) $, $i\in \mathcal{V}$, such that $h\left(
i\right) \in (-\pi ,\pi ]$, distributed according to the (conditional on $%
\phi $) joint density%
\begin{eqnarray}
&&Z^{-1}\sum_{i\in \mathcal{V\setminus }\partial G}\sum_{n_{i}\in \mathbb{Z}%
}\prod_{\left( i,j\right) \in \mathcal{E}}\exp \left( -\frac{B_{ij}}{2}%
(h\left( i\right) -2\pi n_{i}-h\left( j\right) +2\pi n_{j})^{2}\right)
\label{gmc} \\
&&\times \prod_{i\in \mathcal{V\setminus }\partial G}dh\left( i\right)
\prod_{i\in \partial G}\delta \left( h\left( i\right) -\phi \right) ,  \notag
\end{eqnarray}%
where $\phi $ is the value of a uniformly distributed in $(-\pi ,\pi ]$
random variable $\Phi $ and $\left\{ B_{ij}\right\} $ are some positive
numbers. Heuristically, one can construct $h$ by first choosing $\phi $
uniformly in $(-\pi ,\pi ]$, then sampling a discrete Gaussian free field
(DGFF) on $G$ with Dirichlet boundary condition $\phi $ (denoted as $h^{\phi
}$); finally, we obtain $h$ by taking the value of $h^{\phi }$ modulo $2\pi $%
. We call the random function $h$ thus defined the discrete complex Gaussian
multiplicative chaos on $G$, since it is a discrete analogue of the complex
Gaussian multiplicative chaos considered in Section \ref{GMC}.

Motivated by the spin-wave picture of the XY and Villain models, it is
natural to ask whether the discrete complex Gaussian multiplicative chaos
with any boundary conditions mentioned in\ Remark \ref{bc} also satisfies a
Lee-Yang theorem. Here we restrict attention to Dirichlet boundary
conditions, with $h$ sampled from the discrete complex Gaussian
multiplicative chaos (\ref{gmc}), and we define 
\begin{equation*}
M=\sum_{i\in \mathcal{V}}\lambda _{i}\cos \left( h\left( i\right) \right) ,%
\text{ with }\lambda _{i}\geq 0\text{.}
\end{equation*}%
This is a discrete analogue of the random variable $X_{\infty }$ defined in (%
\ref{Xinf}). In this section we show that the Lee-Yang property cannot be
valid in general, by giving a family of counter-examples in the next
proposition.

We first introduce some notation. Recall that $D_{r}$ is the disk of radius $%
r$ centered at the origin. Given $x\in D_{r}$, we denote by $C_{r}\left(
x\right) $ the conformal radius of $D_{r}$ from $x$.

\begin{proposition}
\label{noLY}Take $B_{ij}=\beta ^{-2},$ $\mathcal{G}_{n,r}=D_{nr}\cap \mathbb{%
Z}^{2}$ and $g=2/\pi $. Let $h^{n,r}$ be sampled from (\ref{gmc}) on $%
\mathcal{G}_{n,r}$ and 
\begin{equation*}
M_{n,r}=\sum_{x\in D_{n}\cap \mathbb{Z}^{2}}\frac{1}{n^{2}}\Re \left[ \exp
\left( ih^{n,r}\left( x\right) +\frac{\beta ^{2}}{2}\left[ g\log
n-gC_{r}\left( \frac{x}{n}\right) \right] \right) \right] .
\end{equation*}%
Then for $\beta \in \left( 1,\sqrt{2}\right) $, large enough $r$ and then
large enough $n$, $\mathbb{E}\left[ \exp \left( zM_{n,r}\right) \right] $
has some zeros that are not purely imaginary.
\end{proposition}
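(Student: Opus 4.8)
The plan is to view $M_{n,r}$, for fixed $r$, as a sequence in $n$ of symmetric random variables admitting a (non‑uniform) Gaussian–exponential moment and converging in distribution to a continuum random variable $X_{r}$, and then to let $r\to\infty$ so that $X_{r}$ converges in distribution to the whole‑plane random variable $X_{\infty}=\Re\int_{D_{1}}e^{i\beta h(x)}\,dx$ of Proposition~\ref{wpGMC} (with $U=D_{1}$, a bounded simply connected domain with smooth boundary). The conclusion will then follow from Corollary~\ref{seq}, using the fact --- already established in the course of proving Proposition~\ref{wpGMC}, via (\ref{slow}) together with Proposition~\ref{tailwp} --- that for $\beta\in(1,\sqrt{2})$ one has $\mathbb{E}\,e^{bX_{\infty}^{2}}=\infty$ for every $b>0$.

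First I would verify the two soft hypotheses of Corollary~\ref{seq} for the sequence $(M_{n,r})_{n}$. For symmetry: the joint law of $(h^{n,r},\Phi)$ described by (\ref{gmc}) is invariant under the simultaneous shift $h(i)\mapsto h(i)+\pi$ for every $i$ (bulk and boundary) together with $\Phi\mapsto\Phi+\pi \pmod{2\pi}$, because the bulk factors depend only on the increments $h(i)-h(j)$, the boundary $\delta$'s are preserved, and $\Phi+\pi$ is again uniform; this shift multiplies each summand $\Re[\exp(ih^{n,r}(x)+\tfrac{\beta^{2}}{2}[g\log n-gC_{r}(x/n)])]=e^{\frac{\beta^{2}}{2}[g\log n-gC_{r}(x/n)]}\cos h^{n,r}(x)$ by $-1$, so $M_{n,r}$ and $-M_{n,r}$ have the same distribution. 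For the moment hypothesis: $M_{n,r}$ is a finite sum of terms bounded in absolute value by $n^{-2}e^{\frac{\beta^{2}}{2}[g\log n-gC_{r}(x/n)]}$, hence is an almost surely bounded random variable, so $\mathbb{E}\,e^{bM_{n,r}^{2}}<\infty$ for every $b>0$.

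The analytic heart of the argument is the convergence in distribution $M_{n,r}\to X_{r}$ as $n\to\infty$, where $X_{r}=\Re(e^{i\Phi}\int_{D_{1}}e^{i\beta h^{r}(y)}\,dy)$ with $h^{r}$ a continuum zero‑boundary GFF on $D_{r}$ and $\Phi$ an independent uniform phase. Conditionally on its boundary value $\Phi$, $h^{n,r}$ is $\Phi$ plus a zero‑boundary discrete Gaussian field with coupling $\beta^{-2}$, so $e^{ih^{n,r}(x)}=e^{i\Phi}e^{i\psi^{n,r}(x)}$ with $\psi^{n,r}$ zero‑boundary of variance $\beta^{2}$ times the $\mathbb{Z}^{2}$ Green's function at $x$; up to an $O(1)$ additive error (which only rescales $X_{r}$ by a fixed positive constant) the exponent $\tfrac{\beta^{2}}{2}[g\log n-gC_{r}(x/n)]$ equals $\tfrac12\operatorname{Var}\psi^{n,r}(x)$, by the standard $\mathbb{Z}^{2}$ Green's‑function asymptotics with constant $g=2/\pi$, which is precisely the Wick renormalization making $\exp(ih^{n,r}(x)+\tfrac{\beta^{2}}{2}[\,\cdots\,])$ converge to the renormalized continuum exponential along $x/n\to y$, while $n^{-2}\sum_{x/n\in D_{1}}$ is a Riemann sum for $\int_{D_{1}}$. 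Since $\beta\in(1,\sqrt{2})$ gives $\beta^{2}<2$, the complex chaos is in $L^{2}$ (the kernel $|y-y'|^{-\beta^{2}}$ is integrable over $D_{1}\times D_{1}$), so both the bound $\sup_{n}\mathbb{E}\,M_{n,r}^{2}<\infty$ and the identification of the limit follow from a second‑moment computation (cf.\ the moment identity in the proof of Proposition~\ref{tailwp} and the discrete‑to‑continuum constructions in \cite{LRV,LSZW}). Finally, $X_{r}\to X_{\infty}$ in distribution as $r\to\infty$: by the construction recalled in Section~\ref{GMC} (see the appendix of \cite{LSZW}) the modulus $|\int_{D_{1}}e^{i\beta h^{r}(y)}\,dy|$ converges in distribution, and multiplying by the independent uniform $e^{i\Phi}$ makes the full complex integral converge in distribution; taking real parts gives $X_{r}\to X_{\infty}$ with $X_{\infty}$ as in Proposition~\ref{wpGMC}.

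It remains to assemble these pieces by a contradiction argument of the type used for Corollary~\ref{seq}. If the conclusion of the proposition failed, then negating the quantifiers would produce a sequence $r_{k}\to\infty$ such that, for each $k$, $\mathbb{E}\,e^{zM_{n,r_{k}}}$ has only pure imaginary zeros for infinitely many $n$; from these choose $n_{k}$ so large that $M_{n_{k},r_{k}}$ lies within $1/k$ of $X_{r_{k}}$ in a fixed metrization of weak convergence. Then $M_{n_{k},r_{k}}\to X_{\infty}$ in distribution, and each $M_{n_{k},r_{k}}$ is symmetric with a finite Gaussian–exponential moment, so Corollary~\ref{seq} applied to $(M_{n_{k},r_{k}})_{k}$ would force $\mathbb{E}\,e^{zM_{n_{k},r_{k}}}$ to have a zero off the imaginary axis for all but finitely many $k$, contradicting the choice of the $n_{k}$. (Equivalently: each $M_{n_{k},r_{k}}\in\mathcal{L}$, so Theorem~\ref{weak} would give $X_{\infty}\in\mathcal{L}$, contradicting $\mathbb{E}\,e^{bX_{\infty}^{2}}=\infty$ for all $b>0$.) I expect the main obstacle to be the convergence $M_{n,r}\to X_{r}$ of the previous paragraph: pinning down the precise variance renormalization through the $\mathbb{Z}^{2}$ Green's function and controlling the discrete complex‑chaos linear functional --- its second moments and its limiting law --- carefully enough to pass to the continuum. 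Everything else is bookkeeping (symmetry and boundedness of $M_{n,r}$), a citation (Proposition~\ref{tailwp} and \cite{LSZW} for the whole‑plane tail and construction), or the Hurwitz‑theorem argument already encapsulated in Corollary~\ref{seq}.
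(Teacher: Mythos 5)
Your proposal is correct and follows essentially the same route as the paper: the discrete-to-continuum convergence $M_{n,r}\to X_{r}$ (the paper's Lemma \ref{limit}, proved exactly as you sketch via the Gibbs--Markov reduction to zero boundary, lattice Green's function asymptotics with $g=2/\pi$, and Riemann sums), then $X_{r}\to X_{\infty}$ as $r\to\infty$, the failure of the sub-Gaussian moment for $X_{\infty}$ from (\ref{slow}), and finally Corollary \ref{seq}. Your explicit verification of the symmetry and boundedness hypotheses and the careful handling of the ``large $r$, then large $n$'' quantifiers via a diagonal subsequence are details the paper leaves implicit; the only slight imprecision is that identifying the limit law of $\hat{M}_{n,r}$ requires convergence of all (mixed) moments, not just the second, as in the paper's Lemma \ref{limit}.
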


Notice that we can write $M_{n,r}=\sum_{x\in D_{n}\cap \mathbb{Z}%
^{2}}\lambda _{n,r}\left( x\right) \cos \left( h^{n,r}\left( x\right)
\right) $, with 
\begin{equation*}
\lambda _{n,r}\left( x\right) =n^{\frac{\beta ^{2}}{2}g-2}e^{-gC_{r}\left(
x/n\right) }.
\end{equation*}

We start by proving the following limit theorem for $M_{n,r}$.

\begin{lemma}
\label{limit}For any $r>0$, as $n\rightarrow \infty $, $M_{n,r}$ converges
in distribution to 
\begin{equation}
\int_{D_{1}}\Re \left[ e^{i\left( \beta h^{r}\left( x\right) +\Phi \right) }%
\right] dx,  \label{regmc}
\end{equation}%
where $h^{r}$ is a continuum GFF on $D_{r}$ with zero boundary, and $\Phi $
is uniformly distributed in $(-\pi ,\pi ]$.
\end{lemma}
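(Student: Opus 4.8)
The plan is to establish the convergence in distribution by recognizing $M_{n,r}$ as a discretized Gaussian multiplicative chaos approximation and invoking standard DGFF-to-GFF convergence together with the known regularization scheme for subcritical complex chaos. First I would recall that, conditional on the boundary value $\phi = \Phi$, the field $h^{n,r}$ is obtained by sampling a DGFF $h^{\phi}_{n,r}$ on $\mathcal{G}_{n,r} = D_{nr}\cap\mathbb{Z}^2$ with Dirichlet data $\phi$ and reducing modulo $2\pi$; since $e^{i h} = e^{i h^{\phi}}$, the modulo reduction is invisible inside the complex exponential. Writing $h^{\phi}_{n,r} = \phi + \hat h_{n,r}$ where $\hat h_{n,r}$ is the centered DGFF with zero boundary condition (with covariance $\beta^2$ times the discrete Green's function, because $B_{ij} = \beta^{-2}$), we get
\begin{equation*}
M_{n,r} = \Re\Big[ e^{i\Phi} \sum_{x\in D_n\cap\mathbb{Z}^2} \frac{1}{n^2}\exp\Big(i\beta\, \tilde h_{n,r}(x) + \frac{\beta^2}{2}\big[g\log n - g C_r(x/n)\big]\Big)\Big],
\end{equation*}
where $\tilde h_{n,r} = \beta^{-1}\hat h_{n,r}$ is the "unit-variance-normalized" DGFF, i.e.\ a DGFF whose Green's function converges to the continuum GFF Green's function $G_{D_r}$ as $n\to\infty$ after rescaling $D_{nr}$ to $D_r$.

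The key analytic input is that the deterministic prefactor $\frac{\beta^2}{2}[g\log n - gC_r(x/n)]$ is exactly the Wick-renormalization counterterm: with $g = 2/\pi$, the variance of the lattice field $\beta \tilde h_{n,r}(x)$ at a point $x \approx nu$ behaves like $\beta^2(g\log n + g\log C_r(u) + o(1))$ — wait, more precisely the discrete Green's function on $D_{nr}$ at the diagonal is $g\log(\text{dist to boundary}) + $ const $= g\log n + g\log C_r(u) + o(1)$, so $\exp(\frac{\beta^2}{2}[g\log n - gC_r(x/n)])$ is comparable to $e^{\frac{\beta^2}{2}\mathbb{E}[(\beta\tilde h_{n,r}(x))^2]}$ up to the sign convention on $C_r$ (the conformal radius versus its log). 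Thus $M_{n,r}$ is, up to the harmless random phase $e^{i\Phi}$ and a Riemann-sum discretization $\frac{1}{n^2}\sum_{x\in D_n\cap\mathbb{Z}^2} \to \int_{D_1}(\cdot)\,dx$ under $x = nu$, precisely the standard lattice regularization of the complex GMC random variable $\int_{D_1} e^{i(\beta h^r(u) + \Phi)}\,du$. I would then cite the convergence-in-distribution of subcritical complex GMC under such lattice approximations — this is in the same spirit as the construction in \cite{LRV} and the appendix computations of \cite{LSZW}, and for $\beta \in (0,\sqrt2)$ the limiting integral is a genuine (complex-valued) absolutely continuous measure, so $\int_{D_1} e^{i(\beta h^r + \Phi)}\,du$ is a well-defined random variable. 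Taking real parts (a continuous operation) and noting $\Phi$ is independent of everything, the convergence $M_{n,r} \xrightarrow{d} \int_{D_1}\Re[e^{i(\beta h^r(x)+\Phi)}]\,dx$ follows.

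The main obstacle is the GMC convergence step: one must check that the specific deterministic normalization $n^{\frac{\beta^2}{2}g - 2}e^{-gC_r(x/n)}$ matches the Wick renormalization of the DGFF closely enough (including the $o(1)$ error in the discrete Green's function expansion being uniform in $x$ on compact subsets, and the boundary behavior near $\partial D_1$ being controlled since $\beta^2 < 2$ keeps the chaos subcritical and $L^2$). I would handle this by first proving $L^2$-convergence of the complex random variables — the second moment is a double integral of $|x_i - y_j|^{-\beta^2}$-type kernels against the discrete Green's function, which converges to the continuum Coulomb-gas integral by dominated convergence (the singularity $|u-v|^{-\beta^2}$ is integrable in $\mathbb{R}^2$ precisely when $\beta^2 < 2$) — and higher even moments $\mathbb{E}|W|^{2k}$ converge likewise to the Coulomb-gas partition function appearing in Proposition \ref{tailwp}; then moment convergence plus the tail bound (\ref{tailwp1}) controlling the distribution gives convergence in distribution. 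A cleaner alternative, which I would prefer to state, is simply to invoke the already-established existence of the limit of $\int_U e^{i\beta h^r}\,dx$ as the field is refined (cited from \cite{LSZW, LRV}) together with the observation that $M_{n,r}$ and the continuum object have the same moment structure in the $n\to\infty$ limit, making the identification of the limit immediate.
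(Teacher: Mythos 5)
Your proposal follows essentially the same route as the paper: factor out the boundary phase $\Phi$ via the Gibbs--Markov decomposition (the mod-$2\pi$ reduction being invisible in the complex exponential), identify the deterministic prefactor with the Wick counterterm through the diagonal asymptotics of the discrete Green's function, and establish convergence of moments to the continuum Coulomb-gas integrals by Riemann-sum convergence, with the tail bound of Proposition \ref{tailwp} supplying determinacy of the moment problem. The paper's proof is exactly this (working with the complex variable $\hat{M}_{n,r}$ and then applying the continuous map $\Re$), so apart from your slightly loose phrase ``$L^2$-convergence'' --- which you correctly reinterpret as convergence of second and higher moments, since no coupling of the discrete and continuum fields is in play --- the argument matches.
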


\begin{proof}
Recall that for any $c>0$, both the discrete and continuum GFF have the
Gibbs-Markov property: a GFF/DGFF with Dirichlet boundary condition $c$
equals in law a zero boundary GFF/DGFF plus the constant $c$. Therefore it
suffices to prove the lemma with $\Phi =0$ in both (\ref{gmc}) and (\ref%
{regmc}). Next define 
\begin{equation*}
\hat{M}_{n,r}=\sum_{x\in D_{n}\cap \mathbb{Z}^{2}}\frac{1}{n^{2}}\exp \left(
ih^{n,r}\left( x\right) +\frac{\beta ^{2}}{2}\left[ g\log n-gC_{r}\left( 
\frac{x}{n}\right) \right] \right) .
\end{equation*}%
Since $M_{n,r}=\Re \left[ \hat{M}_{n,r}\right] $ is a bounded continuous
function of $\hat{M}_{n,r}$, it suffices to show that $\hat{M}_{n,r}$
converges in distribution to 
\begin{equation*}
\int_{D_{1}}e^{i\beta h^{r}\left( x\right) }dx.
\end{equation*}

To see this, we compute the moments of $\hat{M}_{n,r}$ as%
\begin{equation}
\mathbb{E}\left[ \hat{M}_{n,r}^{k}\right] =\sum_{x_{1},...,x_{k}\in
D_{n}\cap \mathbb{Z}^{2}}\frac{1}{n^{2k}}\mathbb{E}\left[
\prod_{i=1}^{k}e^{ih^{n,r}\left( x_{i}\right) }\right] \exp \left( \frac{%
\beta ^{2}}{2}\left[ gk\log n-g\sum_{i=1}^{k}C_{r}\left( \frac{x_{i}}{n}%
\right) \right] \right) .  \label{Mmom}
\end{equation}%
Let $G_{nr}$ be the Dirichlet Green's function on $D_{nr}\cap \mathbb{Z}^{2}$%
. We have 
\begin{eqnarray*}
&&\mathbb{E}\left[ \prod_{i=1}^{k}e^{ih^{n,r}\left( x_{i}\right) }\right] =%
\mathbb{E}\left[ \exp \left( i\sum_{i=1}^{k}h^{n,r}\left( x_{i}\right)
\right) \right] \\
&=&\exp \left( -\frac{1}{2}\text{Var}\left[ \sum_{i=1}^{k}h^{n,r}\left(
x_{i}\right) \right] \right) \\
&=&\exp \left( -\frac{\beta ^{2}}{2}\left[ \sum_{i=1}^{k}G_{nr}\left(
x_{i},x_{i}\right) +\sum_{i\neq j}G_{nr}\left( x_{i},x_{j}\right) \right]
\right) .
\end{eqnarray*}%
We now use the standard estimates for the lattice Green's function (see,
e.g., \cite{Law}, Chapter 1.6), 
\begin{eqnarray*}
G_{nr}\left( x_{i},x_{i}\right) &=&g\log n-gC_{r}\left( \frac{x_{i}}{n}%
\right) +O\left( 1/n\right) , \\
G_{nr}\left( x_{i},x_{j}\right) &=&g^{D_{r}}\left( x_{i},x_{j}\right)
+O\left( 1/n\right) ,
\end{eqnarray*}%
where $g^{D_{r}}$ is the Dirichlet Green's function in $D_{r}$ with zero
boundary condition. Substituting these into (\ref{Mmom}) and using the
convergence of Riemann sums to integrals concludes the proof.
\end{proof}

\begin{proof}[Proof of Proposition \protect\ref{noLY}]
Notice that when $B_{ij}=\beta ^{-2}$, Lemma \ref{limit} implies that as $%
n\rightarrow \infty $, $M_{n,r}$ converges in distribution to the integral 
\begin{equation*}
X_{r}:=\int_{D_{1}}\Re \left[ e^{i\left( \beta h^{r}\left( x\right) +\Phi
\right) }\right] dx.
\end{equation*}%
Moreover, by the construction and the rotational invariance of the whole
plane complex Gaussian multiplicative chaos, as $r\rightarrow \infty $, $%
\int_{D_{1}}e^{i\left( \beta h^{r}\left( x\right) +\Phi \right) }dx$
converges in distribution to $W_{D_{1}}e^{i\Phi }$, which is equidistributed
with $W_{D_{1}}$. Thus $X_{r}$ converges in distribution to $X_{\infty }$.
However, by Eq (\ref{slow}), when $\beta >1$, $\mathbb{E}\left(
e^{b\left\vert X_{\infty }\right\vert ^{2}}\right) =\infty $ for all $b>0$.
Applying Corollary \ref{seq} completes the proof.
\end{proof}

\bigskip

\paragraph{\textbf{Acknowledgments:}}

The research reported here was supported in part by U.S. NSF grant
DMS-1507019. We thank P.-F. Rodriguez, T. Spencer and a very conscientious
anonymous referee for useful communications and suggestions.

\bibliographystyle{plain}
\bibliography{acompat,LY}

\end{document}